\synctex=1

\documentclass[a4paper,UKenglish,cleveref,autoref,english]{llncs}
\usepackage[utf8]{inputenc}
\bibliographystyle{plainurl}
\usepackage{lineno}
\nolinenumbers

\usepackage{thmtools,thm-restate}
\usepackage{tikz}
\usepackage[hidelinks]{hyperref} 
\usepackage{chngcntr} 
\usepackage{stmaryrd}
\usepackage{latexsym}
\usepackage{amsmath, amssymb} 
\usepackage{mathtools}
\usepackage{logic9}
\usepackage{graphicx}
\usepackage{xparse}
\usepackage{suffix}
\usepackage{comment}

\usepackage[disable]{todonotes} 

\newcommand\withappendix 

\setlength{\marginparwidth}{3.2cm}

\newcommand{\gabriele}[2][]{\todo[color=magenta!25,#1]{G. #2}\ignorespaces}

\usepackage{wrapfig}





\usetikzlibrary{automata,arrows,decorations.pathreplacing,patterns,positioning}
\usetikzlibrary{shapes.misc, positioning}
\pgfdeclarelayer{bg}    
\pgfsetlayers{bg,main}  
\tikzstyle{hidden}=[minimum size=1mm, shape=circle, inner sep=0pt, outer sep=0pt, style={font=\vphantom{Ag}}]
\tikzstyle{dot}=[draw,shape=circle, fill, minimum size=1mm, inner sep=0pt, outer sep=0pt]
\tikzstyle{state}=[draw,shape=circle, fill=gray!25, minimum size=6mm, inner sep=0pt, outer sep=0pt, style={font=\vphantom{Ag}}]
\tikzstyle{letter}=[draw,shape=rectangle, fill=gray!25, minimum size=5mm, inner sep=0pt, outer sep=0pt, style={font=\vphantom{Ag}}]
\tikzset{darrow/.style args={colored by #1 and #2}%
                             {line width=3pt,#1, 
                              postaction={draw,thin,#2}, 
                             }}
\tikzstyle{arrow} = [-, shorten >= 2pt, shorten <= 1pt, white, line width=2pt, 
                     postaction={draw, ->, >=stealth', line width=0.5pt,black}]

\tikzstyle{iletter}=[draw,shape=rectangle, minimum size=5mm, node distance=7mm, inner sep=0pt, outer sep=0pt, style={font=\vphantom{Ag}}]
\tikzstyle{oletter}=[draw,shape=rectangle, minimum size=5mm, node distance=7mm, inner sep=0pt, outer sep=0pt, style={font=\vphantom{Ag}}]

\newsavebox{\wraptext}
\newsavebox{\wrapfig}
\newsavebox{\deftext}
\savebox{\deftext}{Xp}
\newcommand\mywraptextjustified[2]{\savebox{\wraptext}{\parbox[t]{\linewidth-#1}%
                                                                 {#2\linebreak}}}
\newcommand\mywraptext[2]{\savebox{\wraptext}{\parbox[t]{\linewidth-#1}%
                                                        {#2\par\phantom{X}}}}
\newcommand\mywrapfig[2]{\savebox{\wrapfig}{\parbox[t]{#1}{\centering #2}}%
                         \usebox{\wraptext}%
                         \raisebox{-\ht\wrapfig}{\usebox{\wrapfig}}%
                         \vspace{-\ht\deftext minus 1pt}\linebreak}

\newcommand\longshort[2]{\ifdefined\withappendix #1\else #2\fi}

\newcommand\dom{\mathsf{dom}}
\newcommand{\lftmark}{\mathop{\vdash}}
\newcommand{\rgtmark}{\mathop{\dashv}}
\newcommand{\pspace}{\textsf{PSPACE}}

\newcommand{\pump}{\mathit{pump}}

\newcommand{\flow}{\mathit{flow}}
\NewDocumentCommand\flw{O{}m}{\mathit{flow}_{#1}(#2)}
\NewDocumentCommand\out{O{}m}{\mathit{out}_{#1}(#2)}
\NewDocumentCommand\bout{O{}m}{\mathit{bigout}_{#1}(#2)}
\newcommand{\nd}{p}

\newcommand\ipar{\mathsf{ipar}}
\newcommand\opar{\mathsf{opar}}
\newcommand\move{\mathsf{move}}
\newcommand\nxt{\mathsf{next}}

\definecolor{nicecyan}{HTML}{006165}
\definecolor{nicered}{HTML}{DB3A34}

\newcommand\otype{\tau}

\newcommand{\short}[1]{\mathit{low}(#1)}

\newcommand\oeq{=_o}

\newcommand{\orig}{\mathit{orig}}

\newcommand\oelement{\sigma}

\newcommand{\Qright}{Q_{\succ}}
\newcommand{\Qleft}{Q_{\prec}}
\newcommand{\lft}{\mathsf{L}}
\newcommand{\rgt}{\mathsf{R}}



\newcommand{\juxt}{\cdot}

\newcommand{\obound}{\mathbf{C}}
\newcommand{\fbound}{\mathbf{M}}


\newcommand{\an}[1]{\mathit{an}(#1)}

\begin{document}
\title{One-way Resynchronizability \\of Word Transducers}
\author{Sougata Bose\inst{1} \and
	S.N.~Krishna\inst{2} \and
	Anca Muscholl\inst{1}\and
	Gabriele Puppis\inst{3}}
\authorrunning{S.~Bose, S.N.~Krishna, A.~Muscholl, G.~Puppis}
\institute{LaBRI, University of Bordeaux, France\and
	Dept.~of Computer Science \& Engineering IIT Bombay, India
	 \and
	Dept.~of Mathematics, Computer Science, and Physics, Udine University, Italy\\
}
\maketitle

\begin{abstract}
  The origin semantics for transducers was proposed in 2014, and it led to various characterizations and decidability results that are in contrast with the classical semantics. In this paper we add a further decidability result for characterizing transducers that are close to one-way transducers in the origin semantics. We show that it is decidable whether a non-deterministic two-way word transducer can be resynchronized by a bounded, regular resynchronizer into an origin-equivalent  one-way transducer. The result is in contrast with the usual semantics, where it is undecidable to know if a non-deterministic two-way transducer is equivalent to some one-way transducer.

  \keywords{String transducers \and  Resynchronizers \and One-way transducers}
\end{abstract}

	\section{Introduction}

Regular word-to-word functions form a robust and expressive class of
transformations, as they correspond to deterministic
two-way transducers,  to deterministic streaming string
transducers~\cite{AlurCerny10}, and to monadic second-order
logical transductions~\cite{eh01}. However, the transition from
word languages  to functions over words is often quite tricky.  One of
the challenges is to come up with effective
characterizations  of restricted transformations. A first  example is the
characterization of functions computed by one-way transducers 
(known as~\emph{rational functions}). It turns out that it is decidable
whether a regular function is rational~\cite{fgrs13}, but
the algorithm is quite involved~\cite{bgmp18}. In addition, non-determinism makes the
problem intractable: it is undecidable whether the relation computed
by a non-deterministic two-way transducer can be also computed  by a
one-way transducer, \cite{bgmp15}. A second example is the problem of
knowing whether a regular word function can be described by a
first-order logical transduction. This question is still open in
general~\cite{fkt14}, and it is only known how to decide if a \emph{rational}
function is definable in first-order logic~\cite{FGL19}.

Word transducers with origin semantics were introduced by
Boja\'nczyk~\cite{boj14icalp} and shown to provide a machine-independent
characterization of regular word-to-word functions.  The origin semantics, as the name suggests, means tagging the
output by the positions of the input that generated that output.

A nice phenomenon is that
origins can restore decidability for some interesting problems. For
example, the equivalence of word relations computed by
one-way transducers, which is undecidable in the classical
semantics~\cite{gri68,iba78siam}, is \pspace-complete for two-way non-deterministic
transducers  in the
origin semantics~\cite{bmpp18}. Another, deeper, observation is that the origin
semantics provides an algebraic
approach that can be used to decide fragments. For example,
\cite{boj14icalp} provides an effective characterization of
first-order definable word functions under the  origin semantics. As
for the problem of knowing whether a regular word function is
rational, it becomes almost trivial in the origin semantics.
 \begin{figure}[t]
	 	\begin{center}
	 		\begin{tikzpicture}[yscale=0.8]
	 		\node[iletter,draw=none] (i1) at (0,0) {$b$};
	 		\node[iletter,draw=none] (i2) [right of=i1] {$a$};
	 		\node[iletter,draw=none] (i3) [right of=i2] {$c$};
	 		\node[iletter,draw=none] (i4) [right of=i3] {$a$};
	 					
	 		\node[oletter,draw=none] (o1) at (0,-1.5) {$a$};
	 		\node[oletter,draw=none] (o2)  [right of=o1] {$b$};
	 		\node[oletter,draw=none] (o3) [right of=o2] {$a$};
	 		\node[oletter,draw=none] (o4) [right of=o3] {$c$};

	 		\node [anchor=east] (input1) at (-0.6,0) {input:};
	 		\node [anchor=east] (ouput1) at (-0.6,-1.5) {output:};
	 		\node [anchor=east] (origins1) at (-0.6,-0.75) {origins:};
	 		
		    \node[iletter,draw=none] (i5) at (7,0) {$b$};
	 		\node[iletter,draw=none] (i6) [right of=i5] {$a$};
	 		\node[iletter,draw=none] (i7) [right of=i6] {$c$};
	 		\node[iletter,draw=none] (i8) [right of=i7] {$a$};
			
		    \node[oletter,draw=none] (o5) at (7,-1.5)  {$a$};
	 		\node[oletter,draw=none] (o6) [right of=o5] {$b$};
	 		\node[oletter,draw=none] (o7) [right of=o6] {$a$};
	 		\node[oletter,draw=none] (o8) [right of=o7] {$c$};
			
	 		\node [anchor=east] (input2) at (7-0.6,0) {input:};
	 		\node [anchor=east] (ouput2) at (7-0.6,-1.5) {output:};
	 		\node [anchor=east] (origins2) at (7-0.6,-0.6) {resynchronized};
	 		\node [anchor=east] (origins2) at (7-0.6,-0.9) {origins:};
			
	 		\draw[arrow] (o1.north)--(i4.south);
	 		\draw[arrow] (o2.north)--(i1.south);
	 		\draw[arrow] (o3.north)--(i2.south);
	 		\draw[arrow] (o4.north)--(i3.south);
	 	
     	 	\draw[arrow] (o5.north)--([xshift=-0.3mm]i5.south);
	 		\draw[arrow] (o6.north)--([xshift=0.3mm]i5.south);
	 		\draw[arrow] (o7.north)--(i6.south);
	 		\draw[arrow] (o8.north)--(i7.south);
	 		 		\end{tikzpicture}
	 		 	\end{center}
	 
	 \caption{On the left, an input-output pair for a
           transducer $T$ that reads $wd$ and outputs $dw$, $d \in\S$,
           $w \in\S^*$, the arrows denoting origins. 
	 	 On the right, the same input-output pair, but with origins modified by a resynchronizer $\Rr$.   
	 	 The resynchronized relation $\Rr(T)$ is order-preserving, and $T$ is one-way resynchronizable.  
	 	  }
	 \label{fig:intro}
	 	 \end{figure}
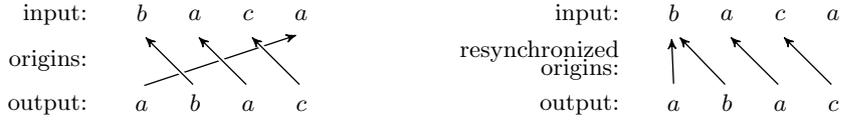

A possible objection against the origin semantics is that the
comparison of two transducers in the origin semantics is too
strict. Resynchronizations were proposed in order to
overcome this issue. A resynchronization  is
a binary relation between input-output pairs with origins, that
preserves the input and the output, changing only the origins. Resynchronizations were introduced for
one-way transducers~\cite{fjlw16icalp}, and later for two-way
transducers~\cite{bmpp18}.   For
one-way transducers \emph{rational} resynchronizations are
transducers acting on the synchronization languages, whereas for
two-way transducers, \emph{regular} resynchronizations are described by
regular properties over the input that restrict the change of
origins. The class of bounded\footnote{``Bounded'' refers here to the number
 of source positions that are mapped to the same target position. It
 rules out resynchronizations such as the universal one.} regular resynchronizations was shown to
behave very nicely, preserving the class of transductions
defined by non-deterministic, two-way transducers: for any bounded
regular resynchronization $\Rr$ and any two-way
transducer $T$, the resynchronized relation $\Rr(T)$ can be computed by
another two-way transducer~\cite{bmpp18}. In particular, non-deterministic, two-way
transducers can be effectively compared modulo bounded regular
resynchronizations.

As mentioned above, it is easy to know if a two-way transducer is
equivalent under the origin semantics to some one-way
transducer~\cite{boj14icalp}, since this is equivalent to being
order-preserving. But what happens if this is not
the case? Still, the given transducer $T$ can be
``close'' to some order-preserving  transducer. What we mean here by
``close'' is that there exists some bounded regular resynchronizer $\Rr$ such
that $\Rr(T)$ is order-preserving and all input-output pairs with
origins produced by $T$ are in the domain of $\Rr$. We call such
transducers \emph{one-way resynchronizable}. Figure \ref{fig:intro}
gives an example. 

In this paper we show that it is decidable if a two-way transducer is
one-way resynchronizable. We first solve the problem for bounded-visit
two-way transducers. A bounded-visit transducer is one for which there
is a uniform bound for the number of visits of any input
position. Then, we use the previous result to show that one-way
resynchronizability is decidable for arbitrary two-way transducers, so
without the bounded-visit restriction. This is done by constructing,
if possible, a bounded, regular resynchronization from the given
transducer to a bounded-visit transducer with regular language
outputs. Finally, we show that bounded regular
resynchronizations are closed under composition, and this allows to
combine the previous construction with our  decidability result
for bounded-visit transducers.

\medskip

\noindent
\emph{Related work and paper overview.} 
The synthesis problem for resynchronizers asks to
compute a resynchronizer from one transducer to another one, when
the two transducers are given as input. The problem was studied
in~\cite{bkmpp19} and  shown to be
decidable for unambiguous two-way transducers (it is open for
unrestricted transducers). The paper~\cite{KM20} shows that the
containment version of the above problem is
undecidable for unrestricted one-way transducers.

 The origin semantics for streaming string transducers (SST)
 \cite{AlurCerny10} 
 has been studied in \cite{BDGPicalp17}, providing a
 machine-independent characterization of  the sets of  origin graphs
 generated 
 by SSTs. An open problem here is to characterize origin
 graphs generated by aperiodic  
 streaming string transducers \cite{DBLP:journals/ijfcs/DartoisJR18,fkt14}.  
 Going beyond words, \cite{DBLP:journals/iandc/FiliotMRT18} investigates 
 decision problems of tree transducers with origin, and regains the  decidability 
 of the equivalence problem for non-deterministic top-down and MSO transducers by 
 considering the origin semantics. An open problem for tree transducers 
 with origin is that of synthesizing resynchronizers as in the word case.

 We will recall regular resynchronizations in Section~\ref{sec:resync}.
 Section~\ref{sec:defs} provides the proof ingredients for  the
 bounded-visit case, and the proof of decidability of one-way
 resynchronizability in the bounded-visit case can be found in Section
 \ref{sec:main}. Finally, in Section \ref{sec:general-case} we sketch the
 proof in the general case. 
\longshort{
Missing proofs can be found in the appendix.
}{
A full version of the paper is available at 
\url{https://arxiv.org/abs/????.????}.
}


	\section{Preliminaries}\label{sec:preliminaries}
                  
Let $\Sigma$ be a finite input alphabet. Given a word $w\in \Sigma^*$
of length $|w|=n$, a \emph{position} is an element 
of its domain $\dom(w)=\{1,\dots,n\}$. For every position $i$,
$w(i)$ denotes the letter at that position.
A \emph{cut} of $w$ is any number from $1$ to $|w|+1$, so a cut
identifies a position \emph{between} two consecutive
letters of the input. The cut $i=1$ represents
the position just before the first input letter, and  $i=|w|+1$ 
the position just after the last letter of $w$.

\paragraph{Two-way transducers.}

We use two-way transducers as defined in~\cite{bgmp18,bkmpp19},
with a  slightly different presentation than in classical papers such as~\cite{she59}.
As usual for two-way machines, for any input $w\in\Sigma^*$, 
$w(0)=\lftmark$ and $w(|w|+1)=\rgtmark$, where 
$\lftmark,\rgtmark\notin\S$ are special markers 
used as delimiters. 
A \emph{two-way transducer} (or just \emph{transducer} from now on)
is a tuple $T=(Q,\S,\G,\D,I,F)$, where 
$\S,\G$ are respectively the input and output alphabets,
$Q=\Qleft \uplus \Qright$ is the 
set of states, partitioned into left-reading states from $\Qleft$ 
and right-reading states from  $\Qright$, $I\subseteq \Qright$ is 
the set of initial states, $F\subseteq Q$ is the set of final states, and
$\Delta \subseteq Q\times (\Sigma\uplus\{\lftmark,\rgtmark\}) \times \G^* \times Q$ 
is the finite transition relation. 
Left-reading states read the letter 
to the left, whereas right-reading states read the letter to the right.
This partitioning will also determine the head movement during a transition,
as explained below.



As usual, to define runs of transducers we  first define configurations.
Given a transducer $T$ and a word $w\in\Sigma^*$,  
a \emph{configuration} of $T$ on $w$ is a state-cut pair $(q,i)$,
with $q\in Q$ and $1\le i\le |w|+1$. 
A configuration $(q,i)$, $1 \leq i \leq |w|+1$  means that the automaton 
is in state $q$ and its head is {\sl between} the $(i-1)$-th 
and the $i$-th letter of 
$w$.  
The transitions that depart from a configuration 
$(q,i)$ and read $a$ are denoted $(q,i) \act{a} (q',i')$, 
and must satisfy one of the following:\\
(1)  $q \in \Qright$, $q' \in \Qright$, $a=w(i)$, $(q,a,v,q') \in\D$, and $i'=i+1$, \\
(2)  $q \in \Qright$, $q' \in \Qleft$, $a=w(i)$, $(q,a,v,q') \in\D$, and $i'=i$, \\
(3) $q \in\Qleft$, $q' \in \Qright$, $a=w(i-1)$, $(q,a,v,q') \in\D$, and $i'=i$,\\ 
(4) $q \in\Qleft$, $q' \in \Qleft$, $a=w(i-1)$, $(q,a,v,q') \in\D$, and $i'=i-1$. 
When $T$ has only right-reading states (i.e.~$\Qleft=\emptyset$),
its head can only move rightward. In this case we call $T$ a 
\emph{one-way transducer}. 

A \emph{run} of $T$ on $w$ is a sequence 
$\r=(q_1,i_1) \act{a_{j_1}\mid v_1} (q_2,i_2) \act{a_{j_2} \mid v_2}
\cdots \act{a_{j_m}\mid v_m} (q_{m+1},i_{m+1})$ 
of configurations connected by transitions.
Note that the positions $j_1,j_2,\dots,j_m$ of letters 
do not need to be ordered from smaller to bigger,
and can differ slightly (by $+1$ or $-1$) 
from the cuts $i_1,i_2,\dots,i_{m+1}$, since cuts take values 
in between consecutive letters.

A configuration $(q,i)$ on $w$ is \emph{initial} (resp.~\emph{final})
if $q\in I$ and $i=1$ 
(resp.~$q\in F$ and $i=|w|+1$). 
A run is \emph{successful} if it starts with an initial configuration 
and ends with a final configuration. 
The \emph{output} associated with a successful run
$\rho$ as above 
is the word 
$v_1 v_2 \cdots v_m\in\Gamma^*$. 
A transducer $T$ defines a relation $\sem{T}\subseteq\S^*\times\Gamma^*$ 
consisting of all the pairs $(u,v)$ such that $v$ is the output of some successful 
run $\rho$ of $T$ on $u$.


\paragraph{Origin semantics.}
In the origin semantics for transducers \cite{boj14icalp} the output
is tagged with information about the position of the input where it
was produced.
If reading the $i$-th letter of the input we output $v$, 
then all letters of $v$ are tagged with $i$, and we say 
they have \emph{origin} $i$. 
We use the notation $(v,i)$ for $v \in\G^*$ to denote that all
positions in the output word $v$ have
origin $i$, and we view $(v,i)$ as word over the alphabet  $\G \times \Nat$.
The outputs associated with a  successful run 
$\rho= (q_1,i_1) \act{b_1 \mid v_1} (q_2,i_2) \act{b_2 \mid v_2}  
(q_3,i_3) \cdots \act{b_m \mid v_m} (q_{m+1},i_{m+1})$
in the origin semantics are the words of the form 
$\nu=(v_1,j_1) (v_2,j_2) \cdots (v_m,j_m)$ over $\G \times \Nat$
 where, for all $1\le k\le m$, $j_k=i_k$ if $q_k\in \Qright$, and 
 $j_k=i_k-1$ if $q_k\in \Qleft$.
Under the origin semantics, the relation defined by $T$, 
denoted $\sem{T}_o$, is the set of pairs $\oelement=(u,\nu)$
---called \emph{synchronized pairs}--- such that
$u\in\Sigma^*$ and $\nu\in (\G \times \Nat)^*$ 
is the output of some successful run on $u$. 

Equivalently,  a synchronized pair $(u,\nu)$ can be described as a
triple $(u,v,\orig)$, where $v$ is the projection of $\nu$ on $\G$,
and $\orig : \dom(v) \to \dom(u)$ associates with each position of $v$
its origin in $u$. So for  $\nu=(v_1,j_1) (v_2,j_2) \cdots (v_m,j_m)$
as above, $v=v_1 \dots v_m$, and, for all positions $i$ s.t. 
$|v_1 \dots v_{k-1}| < i \le |v_1 \dots v_k|$, we have 
$\orig(i)=j_k$.  
Given two transducers $T_1,T_2$, we say they are
\emph{origin-equivalent} if $\sem{T_1}_o=\sem{T_2}_o$.  Note
that two transducers $T_1,T_2$ can be equivalent in the classical
semantics, $\sem{T_1}=\sem{T_2}$, while they can have
different origin semantics, so $\sem{T_1}_o\neq\sem{T_2}_o$.

\paragraph{Bounded-visit transducers.} 
Let $k >0$ be some integer, and $\r$ some
run of a two-way transducer $T$. 
We say that $\r$ is \emph{$k$-visit} if for every $i \ge 0$,
it has at most $k$ occurrences of configurations from 
$Q \times \set{i}$.  
We call a transducer $T$ \emph{$k$-visit} if for every
$\s \in \sem{T}_o$ there is some successful, $k$-visit run
$\r$ of $T$ with output $\s$ (actually we should call the transducer
$k$-visit \emph{in the origin semantics}, but for simplicity we
omit this). 
For example,  the relation 
$\{(w, \overline{w}) \mid w \in \Sigma^*\}$, where $\overline{w}$ denotes 
the reverse of $w$, can be computed by a $3$-visit
transducer.
A transducer is called \emph{bounded-visit} if it is $k$-visit for some $k$. 

\paragraph{Common guess.} 
It is often useful to work with a variant of two-way transducers
that can guess beforehand some annotation on the input and inspect 
it consistently when visiting portions of the input multiple times. 
This feature is called \emph{common guess} \cite{BDGPicalp17},
and strictly increases the expressive power of two-way transducers, including bounded-visit
ones.


	\section{One-way resynchronizability}
\label{sec:resync}


\subsection{Regular resynchronizers} \label{ssec:regresync}
Resynchronizations are used to compare transductions in the origin
semantics. A \emph{resynchronization}  is a binary relation
$\Rr \subseteq (\S^*\times(\G\times\Nat)^*)^2$ over synchronized pairs such that 
$(\oelement,\oelement') \in \Rr$ implies that $\oelement=(u, v, \orig)$ and 
$\oelement'=(u, v,\orig')$ for some origin mappings $\orig,\orig':\dom(v) \to
\dom(u)$. In other words, a resynchronization will only change the
origin mapping, but neither the input, nor the output.
Given a relation
$S \subseteq \S^*\times (\G\times\Nat)^*$ with origins, 
the \emph{resynchronized relation} $\Rr(S)$ is defined as
$ \Rr(S) = \{ \oelement'  \mid  (\oelement,\oelement')\in \Rr, \;
\oelement\in S\}$. For a transducer $T$ we  abbreviate
$\Rr(\sem{T}_o)$ by $\Rr(T)$. The typical use of a resynchronization $\Rr$ is to
ask, given two transducers $T,T'$, whether $\Rr(T)$ and $T'$ are origin-equivalent. 
%

Regular resynchronizers (originally called MSO resynchronizers)
were introduced in~\cite{bmpp18} as a resynchronization mechanism
that preserves definability by two-way transducers. They were
inspired by MSO (monadic second-order) transductions \cite{CE12,EH07} and they are formally
defined as follows.
A \emph{regular resynchronizer} is a tuple
$\Rr=(\bar I,\bar O,\ipar,\opar,(\move_\otype)_{\otype},(\nxt_{\otype,\otype'})_{\otype,\otype'})$ 
consisting of
\begin{itemize}
  \item some monadic parameters (colors) $\bar I = (I_1,\dots,I_m)$ and $\bar O = (O_1,\dots,O_n)$,
  \item MSO sentences $\ipar,\opar$, defining languages over expanded input and output alphabets,
        i.e.~over $\S'=\S\times2^{\{1,\dots,m\}}$ and $\G'=\G\times2^{\{1,\dots,n\}}$, respectively,
  \item MSO formulas $\move_\otype(y,z)$, $\nxt_{\otype,\otype'}(z,z')$ with two 
        free first-order variables and parametrized by expanded output
        letters $\otype,\otype'$ (called types, see below).
\end{itemize}
To apply a regular resynchronizer as above, one first guesses the
valuation of all the
predicates $I_j,O_k$, and uses it to interpret the 
parameters $\overline{I}$ and $\overline{O}$.
Based on the chosen valuation of the parameters $\overline{O}$, each position $x$ of 
the output $v$ gets an associated \emph{type} $\otype_x = (v(x), b_1, \dots, b_n) \in \G \times \{0,1\}^n$, 
where $b_j$ is $1$ or $0$ depending on whether $x \in O_j$ or not. We
refer to  the output word together with the valuation of the
output parameters as \emph{annotated output}, so a word over $\G \times \{0,1\}^n$. Similarly,
the \emph{annotated input} is a word over $\S \times \set{0,1}^m$. 
The annotated input and output word must
satisfy the formulas $\ipar$ and $\opar$, respectively.

The origins of output positions are constrained
using the formulas $\move_\otype$ and $\nxt_{\otype,\otype'}$, 
which are \emph{parametrized by output types and
evaluated over the annotated input}. Intuitively, the formula $\move_\otype(y,z)$ states how the origin of every output 
position of type $\otype$ changes from $y$ to $z$. We refer to 
$y$ and $z$ as \emph{source} and \emph{target} origin, respectively. 
The formula $\nxt_{\otype, \otype'}(z,z')$ instead constrains the target 
origins $z,z'$ of any two consecutive output positions with types $\otype$ and $\otype'$, respectively.

Formally, 
$\Rr=(\bar I,\bar O,\ipar,\opar,(\move_\otype),(\nxt_{\otype,\otype'}))$
defines the resynchronization consisting of all pairs 
$(\s,\s')$, with $\s=(u,v,\orig)$, $\s'=(u,v,\orig')$, 
$u \in \S^*$, and $v \in \G^*$, for which there exist $u' \in {\S'}^*$ and
$v' \in {\G'}^*$ such that
\begin{itemize}
  \item $\pi_\S(u')=u$ and $\pi_\G(v')=v$ 
  \item $u'$ satisfies $\ipar$ and $v'$ satisfies $\opar$,
  \item $(u',\orig(x),\orig'(x))$ satisfies $\move_\otype$ for all $\otype$-labeled output positions $x \in\dom(v')$, and
  \item $(u',\orig'(x), \orig'(x+1))$ satisfies $\nxt_{\otype,\otype'}$ 
        for all $x, x+1 \in \dom(v')$ such that $x$ and $x+1$ have label $\otype$ and $\otype'$, respectively. 
\end{itemize}
      
\begin{example}\label{ex:regular-resync}
  Consider the following resynchronization $\Rr$. 
  A pair $(\s,\s')$ belongs to $\Rr$ if 
  $\s=(uv,uwv,\orig)$, $\s'=(uv,uwv,\orig')$, 
  with $u,v,w \in\S^+$. 
  The origins $\orig$ and $\orig'$ are both the identity over $u$ and $v$. 
  The origin of every position of $w$ in $\s$ (hence a source origin)
  is either the first or the last position of $v$. 
  The origin of every position of $w$ in $\s'$ (a target origin) 
  is the first position of $v$.

  This resynchronization is described by a regular resynchronizer that
  uses two input parameters $I_1,I_2$ to mark the last and the first 
  positions of $v$ in the input, and one output parameter $O$ to mark 
  the factor $w$ in the output. 
  The formula $\move_{\otype}(y,z)$ is either $(I_1(y) \vee I_2(y)) \wedge I_2(z)$
  or $(y=z)$, depending on whether the type $\otype$ describes a position
  inside $w$ or a position outside $w$.
\end{example}

We now turn to describing some important restrictions on (regular) resynchronizers.
Let $\Rr=(\bar I,\bar O,\ipar,\opar,(\move_\otype),(\nxt_{\otype,\otype'}))$ be a resynchronizer.
\begin{itemize}
  \item $\Rr$ is \emph{$k$-bounded} (or just \emph{bounded}) 
        if for every annotated input $u' \in {\S'}^*$, 
        every output type $\otype \in \G'$, and every position $z$, 
        there are at most $k$ positions $y$ such that $(u',y,z)$ 
        satisfies $\move_\otype$. Recall that $y,z$ are input positions.


  \item $\Rr$ is \emph{$T$-preserving} for a given transducer $T$,
        if every $\s \in\sem{T}_o$ belongs to the domain of $\Rr$.
  \item $\Rr$ is \emph{partially bijective} if each $\move_\t$
    formula defines a
    partial, bijective function from source origins to target
    origins. Observe that this property implies that $\Rr$ is
    1-bounded. 
        

\end{itemize}

 The boundedness restriction rules out resynchronizations such as the
 universal one, that imposes no restriction on the change of origins. It is a decidable restriction~\cite{bmpp18}, and 
        it guarantees that definability by two-way transducers is effectively
        preserved under regular resynchronizations, modulo common guess.
        More precisely, Theorem 16 in~\cite{bmpp18} shows that, given 
        a bounded regular resynchronizer $\Rr$ and a transducer $T$, 
        one can construct a transducer $T'$ with common guess that is
        origin-equivalent to $\Rr(T)$.

\addtocounter{example}{-1}
\begin{example}[continued]
Consider again the regular resynchronizer 
$\Rr$ 
described in the previous example.
Note that $\Rr$ is $2$-bounded, since at most two source origins are redirected
to the same target origin. 
If we used an additional output parameter to distinguish, among the positions of $w$,
those that have source origin in the first position of $v$ and those that have source
origin in the last position of $v$, we would get a $1$-bounded, regular resynchronizer.
\end{example}


We state below two crucial properties of regular resynchronizers
(the second lemma is reminiscent of Lemma 11 from \cite{KM20},
which proves closure of bounded resynchronizers with vacuous 
$\nxt_{\otype,\otype'}$ relations).
	
\begin{restatable}{lemma}{OneBounded}\label{lem:one-bounded}
Every bounded, regular resynchronizer is effectively equivalent 
to some  $1$-bounded, regular resynchronizer.
\end{restatable}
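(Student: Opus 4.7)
The plan is to enrich the output annotation of $\Rr$ with enough extra information to single out, at each output position, a unique candidate source origin among the at most $k$ allowed by the $\move$-formulas. Concretely, suppose $\Rr=(\bar I,\bar O,\ipar,\opar,(\move_\tau),(\nxt_{\tau,\tau'}))$ is $k$-bounded. I would extend $\bar O$ with $\lceil \log_2 k\rceil$ additional monadic output parameters whose joint interpretation encodes, at each output position, a \emph{rank} in $\{1,\ldots,k\}$; the refined output types thus become pairs $(\tau,c)$ with $c\in\{1,\ldots,k\}$.

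For each refined type $(\tau,c)$, I would define $\move'_{(\tau,c)}(y,z)$ to hold iff $\move_\tau(y,z)$ holds and $y$ is the $c$-th smallest input position $y'$ (in the natural order of $\dom(u')$) such that $\move_\tau(y',z)$. Since $k$ is a fixed constant, this is first-order expressible by explicitly quantifying the at most $c-1$ smaller witnesses. The next-formulas ignore the new coordinate: $\nxt'_{(\tau,c),(\tau',c')}(z,z')\equiv\nxt_{\tau,\tau'}(z,z')$. Finally, $\ipar'$ is left equal to $\ipar$ and $\opar'$ is obtained from $\opar$ by evaluating it on the projection onto the original $\bar O$ components (plus the trivial constraint that the added bits encode a value in $\{1,\ldots,k\}$).

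With this definition, $\Rr'$ is immediately $1$-bounded: for every annotated input $u'$, every refined type $(\tau,c)$, and every target $z$, the only possible source $y$ is the $c$-th smallest element of the at-most-$k$-element set $\{y':\move_\tau(u',y',z)\}$. To show that the resynchronizations defined by $\Rr$ and $\Rr'$ coincide, a witness for a pair $(\sigma,\sigma')\in\Rr$ is promoted to a witness for $\Rr'$ by coloring every output position $x$ of type $\tau$ with the rank of $\orig(x)$ inside $\{y':\move_\tau(u',y',\orig'(x))\}$; this rank lies in $\{1,\ldots,k\}$ because $\orig(x)$ always belongs to the set and the set has size at most $k$. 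Conversely, since $\move'_{(\tau,c)}$ implies $\move_\tau$, any witness for $\Rr'$ yields a witness for $\Rr$ by simply forgetting the new parameters.

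The main (minor) obstacle lies in expressing the rank in MSO and verifying that no further consistency requirement on the extra output annotation is needed. The former is immediate because $k$ is a constant, which allows one to bound-count the smaller positions satisfying $\move_\tau(\cdot,z)$. The latter works because $\move'_{(\tau,c)}$ on its own forces the extra annotation to coincide with the true rank---a wrongly colored output position simply fails the corresponding $\move'$ constraint and is thereby excluded.
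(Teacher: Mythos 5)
Your proposal is correct and follows essentially the same route as the paper: both attach an index in $\{1,\dots,k\}$ to each output position via extra output parameters, redefine $\move'_{(\tau,c)}(y,z)$ to force $y$ to be the $c$-th source satisfying $\move_\tau(\cdot,z)$, keep $\nxt$ and $\ipar$ unchanged, and observe that $\move'$ itself enforces consistency of the new annotation. The only cosmetic difference is that you encode the index with $\lceil\log_2 k\rceil$ parameters while the paper uses $k$; this does not affect the argument.
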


\begin{restatable}{lemma}{ResyncClosure}\label{lem:resync-closure}
The class of bounded, regular resynchronizers is effectively
closed under composition.
\end{restatable}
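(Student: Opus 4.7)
The plan is to first apply Lemma~\ref{lem:one-bounded} to reduce the problem to composing two $1$-bounded regular resynchronizers $\Rr_1$ and $\Rr_2$. The key observation is that $1$-boundedness of $\Rr_2$ makes $\move^{(2)}_\tau(y,z)$ a partial function in $y$ given $z$: for every annotated input, every type $\tau$, and every target position $z$, at most one source position $y$ satisfies the formula. Consequently, in the composition, the intermediate origin $\orig'(x)$ of any output position $x$ is uniquely determined by the target origin $\orig''(x)$ (and the type of $x$), and can therefore be recovered in MSO via an existential quantifier.

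For the construction of the composed resynchronizer $\Rr$, I would take the disjoint union of the parameters of $\Rr_1$ and $\Rr_2$, and the conjunctions $\ipar = \ipar^{(1)} \wedge \ipar^{(2)}$ and $\opar = \opar^{(1)} \wedge \opar^{(2)}$. A type $\tau$ of $\Rr$ combines the output letter together with the output-parameter valuations of both $\Rr_1$ and $\Rr_2$, and projects to types $\tau_1, \tau_2$ of $\Rr_1, \Rr_2$. The move and next formulas of $\Rr$ are then
\[
\move_\tau(y,z) \;:=\; \exists y'.\; \move^{(1)}_{\tau_1}(y,y') \,\wedge\, \move^{(2)}_{\tau_2}(y',z),
\]
\[
\nxt_{\tau,\tau'}(z,z') \;:=\; \nxt^{(2)}_{\tau_2,\tau'_2}(z,z') \,\wedge\, \exists y',y''.\; \move^{(2)}_{\tau_2}(y',z) \,\wedge\, \move^{(2)}_{\tau'_2}(y'',z') \,\wedge\, \nxt^{(1)}_{\tau_1,\tau'_1}(y',y'').
\]

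Correctness then hinges on the fact that, by $1$-boundedness of $\Rr_2$, all the existentially quantified witnesses $y'$ across the different formulas converge to the same unique intermediate origin at each output position; so the position-wise guesses automatically assemble into a single coherent intermediate origin mapping $\orig'$, which is readily checked to satisfy the required $\move^{(1)}$, $\nxt^{(1)}$, $\move^{(2)}$ and $\nxt^{(2)}$ constraints. Boundedness of the composition follows as well: for each target $z$ and type $\tau$, $\Rr_2$ supplies a unique $y'$ and $\Rr_1$ then a unique $y$, making $\Rr$ itself $1$-bounded. The main obstacle I anticipate is precisely ensuring global consistency of the intermediate witnesses across positions, and this is exactly what the reduction to $1$-bounded resynchronizers buys us --- without it, one would need extra output parameters to commit to a specific witness of $\move^{(2)}$ at every output position, significantly complicating the construction.
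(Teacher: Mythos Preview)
Your proposal is correct and essentially identical to the paper's proof: both reduce via Lemma~\ref{lem:one-bounded} to $1$-bounded resynchronizers, take the union of parameters and conjunction of $\ipar$/$\opar$, define $\move$ by existentially quantifying the intermediate origin, and define $\nxt$ by conjoining the outer $\nxt$ with the inner $\nxt$ evaluated at the (unique) intermediate origins recovered via the outer $\move$ formulas. Your observation that $1$-boundedness of the second resynchronizer is what makes the intermediate witness unique (and hence globally coherent) is exactly the point the paper singles out.
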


\subsection{Main result} \label{ssec:mres}

Given a two-way transducer $T$ one can ask if it is origin-equivalent
to some one-way transducer. It was observed in~\cite{boj14icalp} that 
this property  holds if and only if all synchronized pairs 
defined by $T$ are \emph{order-preserving}, namely, for all 
$\s=(u,v,\orig)\in\sem{T}_o$ and all $y,y' \in \dom(v)$, with $y<y'$, 
we have $\orig(y) \le \orig(y')$.  
The decidability of the above question should be contrasted to
the analogous question in the classical semantics: 
``is a given  two-way transducer classically equivalent to some one-way transducer?'' 
The latter problem
turns out to be decidable for functional
transducers~\cite{fgrs13,bgmp18}, but is undecidable for arbitrary two-way
transducers~\cite{bgmp15}.

Here we are interested in a different, more relaxed notion:

\begin{definition}\label{def:oneway-resynchronizability}
A transducer $T$ is called \emph{one-way resynchronizable} if
there exists a bounded, regular resynchronizer $\Rr$ that is
$T$-preserving and such that $\Rr(T)$ is order-preserving.
\end{definition}

Note that if $T'$ is an order-preserving transducer, then one can
construct rather easily a one-way transducer $T''$ such that
$T'\oeq T''$, by eliminating non-productive U-turns from accepting
runs. 
 
Moreover, note that without the condition 
of being $T$-preserving 
every transducer $T$ would be
one-way resynchronizable, using the empty
resynchronization.

\begin{example}\label{ex:oneway-resynchronizability}
    Consider the transducer $T_1$ that moves the
  last letter of the input $wa$ to the front by a first left-to-right
  pass that outputs the last letter $a$, followed by a right-to-left
  pass without output, and finally by a left-to-right pass that
  produces the remaining $w$. Let $\Rr$ be the bounded regular
  resynchronizer that redirects the origin of the last $a$ to the
  first position. 
  Assuming an output parameter $O$ with an interpretation constrained 
  by $\opar$ that marks the last position of the output, the formula 
  $\move_{(a,1)}(y,z)$ says that target origin $z$ (source origin $y$,
  resp.) of the last $a$ is the first (last,
  resp.) position of the input. 
    It is easy to see that $\Rr(T_1)$ is origin-equivalent
  to the one-way transducer that on input $wa$, guesses $a$ and
  outputs $aw$. Thus, $T_1$ is one-way resynchronizable. See also Figure \ref{fig:intro}. 
  \label{eg1}
\end{example}
\begin{example}\label{ex:cross}
  Consider the transducer $T_2$ that reads inputs of the form $u \# v$
  and outputs $vu$ in the obvious way, by a first left-to-right pass
  that outputs $v$, followed by a right-to-left pass, and a finally a
  left-to-right pass that outputs $u$.
 %
  Using the characterization with the notion of cross-width that 
  we introduce below, it can be shown that $T_2$ is not one-way 
  resynchronizable.
\label{eg2}
\end{example}


In order to give a flavor of  our results, we anticipate here the two main theorems, 
before introducing the key technical concepts of cross-width and inversion 
(these will be defined further below).

\begin{theorem}\label{thm:oneway-def}
For every bounded-visit transducer $T$, the following are equivalent:
\begin{itemize}
  \item[(1)] $T$ is one-way resynchronizable,
  \item[(2)] the cross-width of  $T$ is finite,
  \item[(3)] no successful run of $T$ has inversions,
  \item[(4)] there is a 
    partially bijective, regular resynchronizer $\Rr$
             that is $T$-preserving and such that $\Rr(T)$ is
             order-preserving. 
\end{itemize}
Moreover, condition (3) is decidable.
\end{theorem}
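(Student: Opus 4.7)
The plan is to establish the cycle of implications $(4)\Rightarrow(1)\Rightarrow(3)\Rightarrow(2)\Rightarrow(4)$, and then to address the decidability of $(3)$ separately.

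The implication $(4)\Rightarrow(1)$ is immediate, since a partially bijective resynchronizer is in particular $1$-bounded, hence bounded. For $(1)\Rightarrow(3)$, I would proceed by contraposition, using a pumping argument: if some successful run of $T$ contains an inversion, then (by the bounded-visit assumption and a standard loop-analysis) we can iterate the inversion to build synchronized pairs in $\sem{T}_o$ whose cross-width grows without bound. Any $T$-preserving, $k$-bounded regular resynchronizer $\Rr$ producing an order-preserving $\Rr(T)$ would have to redirect all the crossing output positions over a fixed cut to origins on a single side, and the inversion witnesses that more than $k$ of them must share a common target origin, contradicting $k$-boundedness.

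For $(3)\Rightarrow(2)$, the idea is that bounded-visit runs without inversions cannot generate arbitrarily many crossings. A Ramsey/pumping argument on the loop structure of successful runs shows that any sufficiently large cross-width forces the existence of two output positions $y_1<y_2$ with $\orig(y_1)>\orig(y_2)$ produced inside iterable loops, i.e.~a genuine inversion in some successful run. Contrapositively, absence of inversions yields a uniform bound on the cross-width. The implication $(2)\Rightarrow(4)$ is the constructive heart of the theorem: assuming the cross-width of $T$ is bounded by some $c$, I would define a regular resynchronizer that, using output parameters, annotates each output position with (i) the cut it should be moved to in an order-preserving re-ordering and (ii) its rank (among at most $c$ positions sharing that annotation). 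The formulas $\move_\tau$ use these annotations together with input parameters marking the possible target origins to define a \emph{bijective} partial function between source and target origins; the $\opar$ formula enforces consistency (every output position gets a unique target) and the MSO-definability follows from the boundedness of $c$.

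For decidability of $(3)$, I would reduce the existence of an inversion in a successful run of $T$ to a non-emptiness check. Concretely, one can construct a bounded-visit two-way automaton (with common guess) that simulates a successful run of $T$ and additionally guesses two output positions $y_1<y_2$ together with a witness that $\orig(y_1)>\orig(y_2)$ and that the surrounding portions of the run form loops that could be iterated; non-emptiness of this automaton is decidable. The main obstacle I expect is $(2)\Rightarrow(4)$: turning a quantitative bound on cross-width into a partially \emph{bijective} MSO resynchronizer requires producing a coherent global matching between source and target origins, and making this matching MSO-definable while preserving the bijection across the whole output seems to be the delicate technical point; the $(3)\Rightarrow(2)$ direction is also nontrivial because translating the non-existence of a single iterable inversion into a global uniform bound requires a careful Ramsey-style argument on the shapes of bounded-visit runs.
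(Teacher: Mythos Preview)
Your cycle $(4)\Rightarrow(1)\Rightarrow(3)\Rightarrow(2)\Rightarrow(4)$ differs from the paper's $(4)\Rightarrow(1)\Rightarrow(2)\Rightarrow(3)\Rightarrow(4)$, and two of your steps have real problems.

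First, your argument for $(1)\Rightarrow(3)$ is incorrect as stated. You claim that after pumping an inversion, a $k$-bounded resynchronizer would be forced to send ``more than $k$ of them to a common target origin''. That does not follow: the many source origins on one side of the cut could be redirected to many \emph{distinct} target origins on the other side, so $k$-boundedness alone (at most $k$ sources per target) is not contradicted. The paper instead proves $(1)\Rightarrow(2)$ directly, using the equivalence between boundedness and \emph{bounded traversal} (Lemma~\ref{lem:bounded-traversal}): a large cross forces many distinct source origins to traverse a fixed input position, which contradicts bounded traversal. This step, incidentally, does not use the bounded-visit hypothesis at all, which matters later in the paper for the general case.

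Second, and more seriously, you place the constructive burden on $(2)\Rightarrow(4)$ and then essentially leave it open: annotating each output position with ``the cut it should be moved to'' and a rank among $c$ positions presupposes a globally coherent, MSO-definable choice of target origins, and you give no mechanism for producing one. The paper avoids this entirely by proving $(3)\Rightarrow(4)$ instead. The construction there is substantial: it encodes a Simon factorization tree (height $\le 3|M_T|$) of the run via input parameters, introduces \emph{dominant output intervals} $\bout{\nd}$ for each node, and builds the resynchronizer level by level, using Lemmas~\ref{lem:order}, \ref{lem:bin}, and \ref{lem:idem} to control how many new source origins get collapsed at each inductive step. The absence of inversions is exactly what makes Lemma~\ref{lem:order} (hence the whole induction) go through. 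Your $(3)\Rightarrow(2)$ is morally right and indeed follows from Lemma~\ref{lem:bound}, but in the paper's cycle this direction is a by-product, not a step that needs separate work.

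For decidability of $(3)$, your reduction to non-emptiness is fine and matches the paper's remark that inversions can be detected in polynomial space in $|T|$ and the visit bound.
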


We will use Theorem~\ref{thm:oneway-def} to show that one-way
resynchronizability is decidable for arbitrary two-way transducers
(not just bounded-visit ones). 

\begin{theorem}\label{thm:decidability}
It is decidable whether a given two-way transducer $T$ is one-way resynchronizable.
\end{theorem}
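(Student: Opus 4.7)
The plan is to reduce the general two-way case to the bounded-visit case already settled by Theorem~\ref{thm:oneway-def}. Concretely, given an arbitrary two-way transducer $T$, I would first attempt to build a bounded, regular, $T$-preserving resynchronizer $\Rr_0$ together with a bounded-visit transducer $T_1$ (with regular language outputs) such that $\Rr_0(T)$ is origin-equivalent to $\sem{T_1}_o$. The intuition is that a one-way resynchronizer redirects origins without altering the output, so only the visits of $T$ that actually emit output positions matter for one-way resynchronizability; purely unproductive U-turns can be ignored. If $T$ is one-way resynchronizable, then the productive visits per input position must be uniformly bounded, and a bounded-visit ``productive skeleton'' $T_1$ can be extracted from $T$. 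Whether such a reduction succeeds should itself be a decidable property of $T$, obtained via common-guess annotations identifying the productive visits together with a size-bound check reminiscent of cross-width.

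Once $T_1$ and $\Rr_0$ are in hand, I would invoke Theorem~\ref{thm:oneway-def} to decide whether $T_1$ is one-way resynchronizable, witnessed say by a bounded regular resynchronizer $\Rr_1$. If so, Lemma~\ref{lem:resync-closure} yields a bounded, regular composition $\Rr_1 \circ \Rr_0$ that remains $T$-preserving (since $\Rr_0$ is) and satisfies $(\Rr_1 \circ \Rr_0)(T) = \Rr_1(\Rr_0(T)) = \Rr_1(T_1)$, which is order-preserving; hence $T$ is one-way resynchronizable. Conversely, if $T$ itself admits some bounded regular one-way resynchronizer $\Rr$, then $\Rr(T)$ is order-preserving and thus origin-equivalent to a one-way (in particular $1$-visit) transducer, so the stage-one reduction must succeed on $T$ and the resulting $T_1$ will be declared one-way resynchronizable by Theorem~\ref{thm:oneway-def}. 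This gives a decision procedure that is correct in both directions.

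The hard part will be stage one: constructing $T_1$ and $\Rr_0$ with the required properties and certifying that failure of this construction is itself a decidable witness of non-one-way-resynchronizability. Concretely, I need to analyse how productive visits of $T$ would be organised under a hypothetical one-way resynchronizer, effectively lifting the cross-width and inversion characterisations from Theorem~\ref{thm:oneway-def} out of the bounded-visit setting in order to isolate a bounded-visit productive core of every accepting run, while simultaneously specifying, in a bounded regular way, how origins of the full run of $T$ relate to those of the extracted core. The technical heart is therefore a structural analysis of accepting runs of $T$ tailored to common-guess annotations, after which Lemma~\ref{lem:resync-closure} and Theorem~\ref{thm:oneway-def} assemble the decision procedure.
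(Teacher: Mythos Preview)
Your high-level plan matches the paper's: reduce to a bounded-visit transducer $T_1$ via a bounded regular $T$-preserving resynchronizer $\Rr_0$, decide $T_1$ using Theorem~\ref{thm:oneway-def}, and glue with Lemma~\ref{lem:resync-closure}. The paper's concrete stage-one test is \emph{$K$-sparsity}: one checks whether the outputs produced inside any vertical loop (a subrun returning to the same configuration) have uniformly boundedly many distinct origins; if not, $T$ already has unbounded cross-width and is rejected. When $T$ is $K$-sparse, $\short T$ is obtained by shortcutting maximal vertical loops and emitting their output from a regular language at a fixed anchor position, and $\Rr_0$ redirects origins inside each loop to that anchor. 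Your ``productive skeleton'' intuition is close in spirit but not quite this: merely keeping productive visits does not by itself bound the number of visits, whereas collapsing vertical loops does.

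There is, however, a genuine gap in your converse direction. You assert that if $T$ is one-way resynchronizable then ``the resulting $T_1$ will be declared one-way resynchronizable by Theorem~\ref{thm:oneway-def}'', but give no argument for why one-way resynchronizability of $T$ transfers to $T_1=\Rr_0(T)$. The paper handles this via its Robustness lemma (Lemma~\ref{lem:robustness}), argued by contrapositive: if $T_1$ is \emph{not} one-way resynchronizable then, being bounded-visit, it has unbounded cross-width by Theorem~\ref{thm:oneway-def}, and this unbounded cross-width is pulled back to $T$ through $\Rr_0$. The pullback step crucially requires $\Rr_0$ to be not merely bounded but \emph{partially bijective} (each $\move_\otype$ a partial bijection between source and target origins): an MSO-definable partial bijection can separate only boundedly many source-target pairs across any fixed input position, so a large cross in $T_1$ yields a large cross in $T$ after discarding boundedly many positions. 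Your proposal only asks $\Rr_0$ to be bounded, which does not suffice for this direction, and your forward composition argument, while correct, says nothing about the converse. The missing idea is precisely this strengthened requirement on $\Rr_0$ together with the cross-width pullback it enables.
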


Let us now introduce the first key concept, that of cross-width:
\longshort{
  \begin{definition}[cross-width]\label{def:crosswidth}
Let $\s=(u,v,\orig)$ be  a synchronized pair and let 
$X_1,X_2 \subseteq \dom(v)$ be sets of output 
positions such that, for all $x_1\in X_1$ and 
$x_2\in X_2$, $x_1<x_2$ and $\orig(x_1)>\orig(x_2)$.
We call such a pair $(X_1,X_2)$ a \emph{cross} and define 
its \emph{width} as $\min(|\orig(X_1)|,|\orig(X_2)|)$, where
$\orig(X)=\{\orig(x) \mid x \in X\}$ is the set of origins 
corresponding to a set $X$ of output positions.

The \emph{cross-width} of a synchronized pair $\s$ is the maximal width 
of the crosses in $\s$. A transducer has \emph{bounded cross-width} if
for some integer $k$, all synchronized pairs associated with successful 
runs of $T$ have cross-width at most $k$.
\begin{center}
  \includegraphics[scale=0.58]{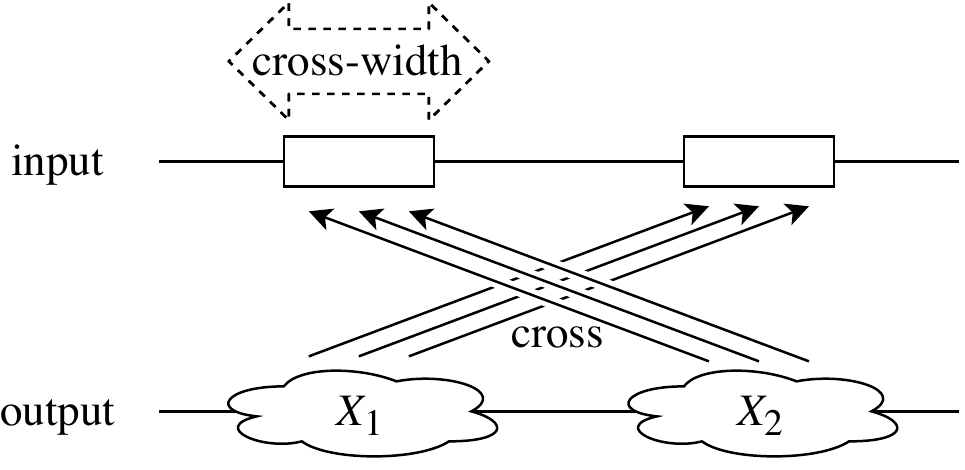}
\end{center}
\end{definition}
}
{
\begin{definition}[cross-width]\label{def:crosswidth}
Let $\s=$ \\
\mywraptextjustified{6.1cm}{
$(u,v,\orig)$ be  a synchronized pair and let 
$X_1,X_2 \subseteq \dom(v)$ be sets of output 
positions such that, for all $x_1\in X_1$ and 
$x_2\in X_2$, $x_1<x_2$ and $\orig(x_1)>\orig(x_2)$.
We call such a pair $(X_1,X_2)$ a \emph{cross} and define 
its \emph{width} as 
}
\mywrapfig{6.1cm}{
\vspace{-5.5mm}%
\hfill%
\includegraphics[scale=0.58,width=\linewidth,height=3cm]{figures/crosswidth.pdf}
}
$\min(|\orig(X_1)|,|\orig(X_2)|)$, where
$\orig(X)=\{\orig(x) \mid x \in X\}$ is the set of origins 
corresponding to a set $X$ of output positions. 
The \emph{cross-width} of a synchronized pair $\s$ is the maximal width 
of the crosses in $\s$. A transducer has \emph{bounded cross-width} if
for some integer $k$, all synchronized pairs associated with successful 
runs of $T$ have cross-width at most $k$.
\end{definition}
}

For instance, the transducer $T_2$ in Example~\ref{ex:cross} has unbounded
cross-width. In contrast, the transducer $T_1$ in
Example~\ref{ex:oneway-resynchronizability} has cross-width one.

The other key notion of \emph{inversion} will be introduced formally 
in the next section (page~\pageref{def:inversion}), as it requires 
a few technical definitions. The notion however is very similar in 
spirit to that of cross, with the difference that a single inversion 
is sufficient for witnessing a family of crosses with arbitrarily 
large cross-width.



\section{Proof overview for Theorem~\ref{thm:oneway-def}}\label{sec:defs}

This section provides an overview of the proof of
Theorem~\ref{thm:oneway-def}, and introduces the main ingredients.

We will use flows (a concept inspired from crossing sequences \cite{she59,bgmp18}
     and revised in Section \ref{ssec:flows}) 
in order to derive the key notion of inversion. 
Roughly speaking, an inversion in a run involves two loops that produce 
outputs in an order that is reversed compared to the order on origins. 
Inversions were also used in the characterization of one-way definability
of two-way transducers under the classical
semantics~\cite{bgmp18}. There, they were used for
deriving some combinatorial properties of outputs. 
Here we are only interested in detecting inversions, 
and this is a simple task.

Flows will also be used to associate factorization trees with runs
(the existence of factorization trees of bounded height was established
by the celebrated Simon's factorization theorem~\cite{sim72}).
We will use a structural induction on these factorization trees
and the assumption that there is no inversion in every run to construct 
a regular resynchronization witnessing one-way resynchronizability
of the transducer at hand.

Another important ingredient underlying the main characterization is
given by the notion of dominant output interval (Section~\ref{ssec:big}), 
which is used to formalize the invariant of our inductive construction.

\subsection{Flows and inversions}\label{ssec:flows}
\paragraph{Intervals.}
An \emph{interval} of a word
is a set of consecutive positions in it.
An interval is often denoted by 
$I=[i,i')$, with $i=\min(I)$ and $i'=\max(I)+1$.
Given two intervals $I=[i,i')$ and $J=[j,j')$, we write 
$I<J$ if $i'\le j$, and we say that $I,J$ are adjacent 
if $i'=j$.
The union of two adjacent intervals $I=[i,i')$, $J=[j,j')$, 
denoted $I \juxt J$, is the interval $[i,j')$ (if $I,J$
are not adjacent, then $I\juxt J$ is undefined).

\paragraph{Subruns.}
Given a run $\rho$ of a transducer, 
a \emph{subrun} is a factor of $\rho$. Note that 
a subrun of a two-way transducer may visit a position of the input several
times. For an input interval $I=[i,j)$ and a run $\r$, we say that
a subrun $\r'$ of $\r$ \emph{spans over $I$} if $i$ (resp.~$j$)  is
the smallest (resp.~greatest) 
input position labeling some transition of $\r'$. 
The left hand-side of 
the figure at page \pageref{fig:pumping} gives an example 
of an interval $I$ of an input word together with the subruns
$\alpha_1, \alpha_2, \alpha_3, \beta_1, \beta_2, \beta_3$, $\gamma_1$
that span over it.
Subruns spanning over an interval can be left-to-right, left-to-left, 
right-to-left, or right-to-right depending on where the starting and ending 
positions are w.r.t.~the endpoints of the interval.


\paragraph{Flows.}
Flows are used to 
summarize subruns of a two-way transducer that span over a given
interval. 
The definition below is essentially taken from~\cite{bgmp18}, 
except for replacing ``functional'' by ``$K$-visit''.
Formally, a \emph{flow} of a transducer $T$ 
is a graph with vertices divided into two groups, 
$\lft$-vertices and $\rgt$-vertices, labeled by states of $T$, 
and with directed edges also divided into two groups, productive 
and non-productive edges. 
The graph satisfies the following requirements. Edge sources are
either an $\lft$-vertex
labeled by a right-reading state, or an $\rgt$-vertex 
labeled by a left-reading state, and symmetrically for edge
destinations; moreover, edges are of one of the following types:
$\lft\lft$, $\lft\rgt$, $\rgt\lft$, $\rgt\rgt$.
Second, each node is the endpoint of exactly one edge. 
Finally, $\lft$ ($\rgt$, resp.) vertices are 
totally ordered, in such a way that for every $\lft\lft$ ($\rgt\rgt$, resp.)
edge $(v,v')$, we have $v<v'$. We will  only  consider flows of
$K$-visiting transducers, so flows with at most $2K$ vertices. For
example, the flow in the left-hand side of the figure at page
\pageref{fig:pumping} has six $\lft$-vertices on the left, and six
$\rgt$-vertices on the right. The edges $\a_1$, $\a_2$, $\a_3$ are
$\lft\lft$, $\lft\rgt$, and $\rgt\rgt$, respectively.

Given a run $\r$ of $T$ and an interval $I=[i,i')$ on the input,
\emph{the flow of $\r$ on $I$}, denoted $\flw[\rho]{I}$, is
obtained by identifying every configuration at position $i$ 
(resp.~$i'$) with an $\lft$ (resp.~$\rgt$) vertex, labeled by the 
state of the configuration, and every subrun spanning over $I$ 
with an edge connecting the appropriate vertices (this subrun 
is called the \emph{witnessing subrun} of the edge of the flow).
An edge is said to be \emph{productive} 
if its witnessing subrun produces non-empty output.

\paragraph{Flow monoid.}
The composition of two flows $F$ and $G$ is defined when 
the $\rgt$-vertices of $F$ induce the same 
sequence of labels as the $\lft$-vertices of $G$. 
In this case, the composition results in the flow $F\cdot G$ that has as
vertices the $\lft$-vertices of $F$ and the $\rgt$-vertices of $G$, and for edges 
the directed paths in the graph obtained by glueing the $\rgt$-vertices of $F$ with 
the $\lft$-vertices of $G$ so that states are matched. 
Productiveness of edges is inherited by paths, implying that an edge 
of $F\cdot G$ is productive if and only if the corresponding path contains
at least one edge (from $F$ or $G$) that is productive.
When the composition is undefined, we simply write $F\cdot G=\bot$.
The above definitions naturally give rise to a \emph{flow monoid}
associated with the transducer $T$,
where elements are the flows of $T$, extended with a dummy
element $\bot$, and the product operation is given by
the composition of flows, with the convention that $\bot$ is
absorbing. 
It is  easy to verify that for any two adjacent intervals $I<J$
of a run $\r$, $\flw[\r]{I}\cdot\flw[\r]{J} = \flw[\r]{I\cdot J}$. 
We denote by $M_T$ the \emph{flow monoid} of 
a $K$-visiting transducer $T$. 

Let us estimate the size of $M_T$. If $Q$ is the set of states of $T$, there are
at most $|Q|^{2K}$ possible sequences of $\lft$ and $\rgt$-vertices;
and the number of edges (marked as productive or not) is bounded by 
${2K \choose K} \cdot (2K)^K \cdot 2^K \le (2K+1)^{2K}$. 
Including the dummy element $\bot$ in the flow monoid, we get
 $|M_T| \le (|Q| \cdot (2K+1))^{2K} +1 =: \fbound$.

\paragraph{Loops.} 
A loop of a run $\rho$ over input $w$ is an interval $I=[i,j)$ 
with a flow $F=$ 
\mywraptext{7.8cm}{
$\flw[\rho]{I}$ such that 
$F \cdot F =F$ (call $F$ \emph{idempotent}).
The run $\rho$ can be pumped
on a loop $I=[i,j)$ as expected: given $n>0$, 
we let $\pump^n_I(\r)$ be the run obtained from $\r$
by glueing the subruns that span over
the intervals $[1,i)$ and $[j,|w|+1)$ 
with $n$ copies of the subruns spanning over $I$ 
(see figure to the right). 
}
\mywrapfig{7.8cm}{
\vspace{-1.8mm}%
\hfill\scalebox{0.7}{\input{figures/pumping}}%
\label{fig:pumping}%
\label{l:pumping}%
}

The lemma below shows that the occurrence
order relative to subruns witnessing $\lft\rgt$ or $\rgt\lft$ edges of
a loop
(called~\emph{straight edges}, for short)
is preserved when pumping the loop.
This seemingly straightforward lemma is needed for
detecting inversions and its proof is surprisingly 
non-trivial.  
For example, the external edge connecting the two $\lft$-vertices
$1,2$ in the figure above appears before edge $\a_2$, and also before
every copy of $\a_2$ in the run where loop $I$ is pumped.

\begin{restatable}{lemma}{OrderPreservation}\label{lem:order-preservation}
Let $\r$ be a run of $T$ on $u$, let $J<I<K$ be a partition of the domain of $u$
into intervals, with $I$ loop of $\r$, 
and let $F=\flw[\r]{J}$, $E=\flw[\r]{I}$, and $G=\flw[\r]{K}$ 
be the corresponding flows.
Consider an arbitrary edge $f$ of either $F$ or $G$, and a straight edge $e$ of 
the idempotent flow $E$. Let $\r_f$ and $\r_e$ be the witnessing
subruns of $f$ and $e$, respectively.
Then the 
occurrence order of $\r_f$ and $\r_e$ in $\r$ is 
the same as the 
occurrence order of $\r_f$ and any copy of $\r_e$
in $\pump^n_I(\r)$. 
\end{restatable}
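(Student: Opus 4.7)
The plan is to locate the witnessing subrun $\rho_e$ of the straight edge $e$ of $E$ inside a canonical time slot of $\rho$ delimited by two boundary crossings $C_1,C_2$ at the endpoints of $I$, to show that the pumping construction preserves these crossings at the outer boundaries of the pumped $I$-region, and to argue that every copy of $\rho_e$ in $\pump^n_I(\rho)$ lies within the slot so preserved, while $\rho_f$ sits at a fixed side of that slot.

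First, since $e$ is straight (say of type $\lft\rgt$), its witnessing subrun $\rho_e$ enters $I$ at some specific crossing $C_1$ at $\min(I)$ and exits at some specific crossing $C_2$ at $\max(I)$; the reverse happens if $e$ is of type $\rgt\lft$. In both cases, $\rho_e$ is exactly the portion of $\rho$ strictly between $C_1$ and $C_2$ during which the head stays inside $I$.

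Second, I would use the structure of $\pump^n_I(\rho)$: the subruns of $\rho$ spanning over $J$ and $K$ are retained verbatim (up to shifting positions inside $K$ by the extra $(n-1)|I|$), and the subruns spanning over $I$ are replaced by subruns spanning over the $n$-fold copy of $I$. Since the $J$- and $K$-subruns are unchanged, the sequence of directed state-configurations that appear as boundary crossings of $\pump^n_I(\rho)$ at the outer boundaries of the pumped region (at $\min(I)$ and at $\max(I)+(n-1)|I|$) coincides with the sequence of boundary crossings of $\rho$ at $\min(I)$ and $\max(I)$. In particular, $C_1$ and $C_2$ survive as specific outer-boundary crossings of $\pump^n_I(\rho)$, so the time slot of $\rho_f$ relative to them is the same in both runs (as $\rho_f$ itself is one of the retained $J$- or $K$-subruns).

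Third, I would argue that every copy of $\rho_e$ in $\pump^n_I(\rho)$ lies between $C_1$ and $C_2$. By idempotency, $e$ is still an edge of the flow $E^n=E$ on the pumped $I$-region, so its witnessing subrun in $\pump^n_I(\rho)$ is a single contiguous portion of the run that enters the pumped region at $C_1$, exits at $C_2$, and stays inside it throughout. Each copy of $\rho_e$ lives in one of the $n$ copies of $I$ and is therefore contained in this single stretch. Combined with the previous step, $\rho_f$ is either entirely before this stretch or entirely after, exactly as it is entirely before or after $\rho_e$ in $\rho$; the order is therefore preserved for every copy.

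The main obstacle is the third step, namely making precise that the expansion of $\rho_e$ across the $n$ copies of $I$ is confined to the single stretch between $C_1$ and $C_2$ and that every copy of $\rho_e$ sits there, without any such copy leaking into a time slot associated with a different straight edge of $E$. This requires an explicit treatment of how the pumping construction decomposes the witnessing path of $e$ in the idempotent flow $E^n=E$ across the copies of $I$. The straightness of $e$ is essential here: it forces entry and exit at opposite outer boundaries of the pumped region, which together with idempotency prevents the expanded subrun from escaping to $J$ or $K$ before reaching $C_2$; for non-straight edges the analogous statement would fail since both entry and exit happen at the same outer boundary, leaving no canonical enclosing slot.
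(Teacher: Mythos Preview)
Your framing in terms of boundary crossings $C_1,C_2$ is equivalent to the paper's framing via the accessibility order on the juxtaposed graph $FE\cdots EG$, and your first two steps are fine. The gap is precisely where you place it, and your closing paragraph does not close it.

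The unjustified inference is the word ``therefore'' in ``Each copy of $\rho_e$ lives in one of the $n$ copies of $I$ and is therefore contained in this single stretch.'' Living inside some $I_j$ does not by itself place the $j$-th copy of $\rho_e$ on the particular maximal excursion of $\pump^n_I(\rho)$ that runs from $C_1$ to $C_2$: that excursion is the witnessing subrun of $e$ \emph{viewed as an edge of $E^n=E$}, but the copy of $e$ in the $j$-th factor of $E\cdots E$ could a priori lie on a maximal path that flattens to some \emph{other} edge of $E$ (for instance another straight edge, if $E$ has several). Your remark that straightness ``forces entry and exit at opposite outer boundaries'' concerns the shape of the one excursion witnessing $e$ in $E^n$; it says nothing about whether every individual copy $e_j$ sits on that excursion. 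What is actually needed is a structural fact about idempotent flows: their edges decompose into \emph{components}, each containing exactly one straight edge; every maximal path in the juxtaposition $E\cdots E$ stays inside a single component; and any maximal path containing a straight edge reaches both outer sides. From these three facts it follows that the maximal path through any copy $e_j$ flattens to a straight edge in the component of $e$, hence to $e$ itself --- which is exactly the missing identity. The paper obtains this by invoking the component machinery developed in earlier work on flows; without it, your step~3 is a restatement of the goal rather than a proof.
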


\medskip
We can now formalize the key notion of inversion:

\begin{definition}[inversion]\label{def:inversion}
An \emph{inversion} of $\r$ is a tuple $(I,e,I',e')$ such that
\mywraptext{6.8cm}{
\vspace{-3mm}%
\begin{itemize}
  \item $I,I'$ are loops of $\r$ and $I < I'$, 
  \item $e,e'$ are productive straight edges in  
        $\flw[\r]{I}$ and $\flw[\r]{I'}$ respectively,
  \item the subrun witnessing $e'$ precedes the
        subrun witnessing $e$ in the run order
\end{itemize}
(see the figure to the right).
}
\mywrapfig{6.8cm}{
\vspace{0mm}%
\hfill%
\scalebox{0.76}{
\begin{tikzpicture}[baseline=0, inner sep=0, outer sep=0, minimum size=0pt, scale=0.5, yscale=0.9]
  \tikzstyle{dot} = [draw, circle, fill=white, minimum size=4pt]
  \tikzstyle{fulldot} = [draw, circle, fill=black!70, minimum size=4pt]
  \tikzstyle{grayfactor} = [->, shorten >=1pt, rounded corners=6, dashed]
  \tikzstyle{factor} = [->, shorten >=1pt, rounded corners=6]
  \tikzstyle{dotfactor} = [->, shorten >=1pt, dotted, rounded corners=6]
  \tikzstyle{fullfactor} = [->, >=stealth, shorten >=1pt, very thick, rounded corners=6]
  \tikzstyle{dotfullfactor} = [->, >=stealth, shorten >=1pt, dotted, very thick, rounded corners=6]

  \fill [pattern=north east lines, pattern color=gray!25]
        (4,-0.75) rectangle (8,9);
  \draw [dashed, thin, gray] (4,-0.75) -- (4,9);
  \draw [dashed, thin, gray] (8,-0.75) -- (8,9);
  \draw [gray] (4,-1) -- (4,-1.25) -- (8,-1.25) -- (8,-1);
  \draw [gray] (6,-1.5) node [below] {\footnotesize $I$};

  \fill [pattern=north east lines, pattern color=gray!25]
        (12,-0.75) rectangle (16,9);
  \draw [dashed, thin, gray] (12,-0.75) -- (12,9);
  \draw [dashed, thin, gray] (16,-0.75) -- (16,9);
  \draw [gray] (12,-1) -- (12,-1.25) -- (16,-1.25) -- (16,-1);
  \draw [gray] (14,-1.5) node [below] {\footnotesize $I'$};

  \draw (2,0) node (node0) {};
  \draw (4,0) node [dot] (node1) {};
  \draw (8,0) node [dot] (node2) {};
  \draw (12,0) node [dot] (node3) {};
  \draw (16,0) node [dot] (node4) {};
  \draw (18,0) node (node5) {};
  \draw (18,1) node (node6) {};

  \draw (16,1) node [dot] (node7) {};
  \draw (14,1) node (node8) {};
  \draw (14,2) node (node9) {};
  \draw (16,2) node (node10) {};
  \draw (17,2) node (node11) {};
  \draw (17,3) node (node12) {};
  \draw (16,3) node [fulldot] (node13) {};
  \draw (12,3) node [dot] (node14) {};

  \draw (8,3) node [dot] (node15) {};
  \draw (4,3) node [dot] (node16) {};
  \draw (2,3) node (node17) {};
  \draw (2,4) node (node18) {};

  \draw (4,4) node [dot] (node19) {};
  \draw (6,4) node (node20) {};
  \draw (6,5) node (node21) {};
  \draw (4,5) node [dot] (node22) {};
  \draw (3,5) node (node23) {};
  \draw (3,6) node (node24) {};
  \draw (4,6) node [fulldot] (node25) {};
  \draw (8,6) node [dot] (node26) {};

  \draw (12,6) node [dot] (node27) {};
  \draw (14,6) node (node28) {};
  \draw (14,7) node (node29) {};
  \draw (12,7) node [dot] (node30) {};

  \draw (8,7) node [dot] (node31) {};
  \draw (6,7) node (node32) {};
  \draw (6,8) node (node33) {};
  \draw (8,8) node [dot] (node34) {};

  \draw (12,8) node [dot] (node35) {};
  \draw (16,8) node [dot] (node36) {};
  \draw (18,8) node (node37) {};

  \draw [grayfactor] (node0) -- (node1);
  \draw [grayfactor] (node1) -- (node2);
  \draw [grayfactor] (node2) -- (node3);
  \draw [grayfactor] (node3) -- (node4);
  \draw [grayfactor] (node4) -- (node5.center) -- (node6.center) -- (node7);
  
  \draw [fullfactor,black!70] (node7) -- (node8.center) -- (node9.center) -- (node10);
  \draw [grayfactor] (node10) -- (node11.center) -- (node12.center) -- (node13);
  \draw [fullfactor,nicered] (node13) -- (node14);

  \draw [grayfactor] (node14) -- (node15);
  \draw [grayfactor] (node15) -- (node16);
  \draw [grayfactor] (node16) -- (node17.center) -- (node18.center) -- (node19);

  \draw [fullfactor,black!70] (node19) -- (node20.center) -- (node21.center) -- (node22);
  \draw [grayfactor] (node22) -- (node23.center) -- (node24.center) -- (node25);
  \draw [fullfactor,nicered] (node25) -- (node26);
  
  \draw [grayfactor] (node26) -- (node27);
  \draw [fullfactor,black!70] (node27) -- (node28.center) -- (node29.center) -- (node30);
  \draw [grayfactor] (node30) -- (node31);
  \draw [fullfactor,black!70] (node31) -- (node32.center) -- (node33.center) -- (node34);

  \draw [grayfactor] (node34) -- (node35);
  \draw [grayfactor] (node35) -- (node36);
  \draw [grayfactor] (node36) -- (node37);
  
  \draw (node13) node [above left=2mm] {\small {$\pmb{e'}\ \ \ $}};
  \draw (node25) node [above right=2mm] {\small $\ \ \ \pmb{e}$};
\end{tikzpicture}
}
\end{definition}


\subsection{Dominant output intervals}\label{ssec:big}
In this section we identify some particular intervals of the output 
that play an important role in the inductive construction of the 
resynchronizer for a one-way resynchronizable transducer.

Given $n\in\Nat$, we say that a set $B$ of output positions
is \emph{$n$-large} if $|\orig(B)| > n$; otherwise, we say that $B$ is
\emph{$n$-small}. Recall that here we work with a $K$-visiting
transducer $T$, for some constant $K$, and that $\fbound=(|Q| \cdot (2K+1))^{2K}+1$ 
is an upper bound to the size of the flow monoid $M_T$.
We will extensively use the derived constant 
$\obound=\fbound^{2K}$   
to distinguish
between large and small sets of output positions.
The intuition behind this constant is that any 
set of output positions that is $\obound$-large must 
traverse a loop of $\r$. 
This is formalized in the lemma below. The proof uses algebraic properties of 
 the flow monoid $M_T$ \cite{Ismael-bound}
(see also Theorem 7.2 in \cite{bgmp18}, which proves a 
similar result, but with a larger constant derived from
Simon's factorization theorem):

\begin{restatable}{lemma}{Bound}\label{lem:bound}
Let $I$ be an input interval and 
$B$ a set of output positions
with origins inside $I$.
If $B$ is $\obound$-large, then there is a loop $J\subseteq I$
of $\r$ such that $\flw[\r]{J}$ contains a productive straight edge
witnessed by a subrun that intersects $B$ (in particular, 
$\out{J}\cap B \neq \emptyset$).
\end{restatable}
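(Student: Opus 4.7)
The plan is to combine a Ramsey-style pigeonhole on the flow monoid $M_T$ with a careful analysis of which subruns of $\r$ actually produce letters of $B$. First, I would enumerate the positions of $\orig(B)\cap I$ in increasing order as $p_1<p_2<\cdots<p_N$, with $N>\fbound^{2K}$. At each $p_k$, at least one transition of $\r$ emits a letter sitting in $B$; I call such a transition \emph{$B$-producing}. Since $\r$ visits $p_k$ at most $2K$ times, each $B$-producing transition belongs to one of at most $2K$ maximal subruns through $p_k$, and I would label $p_k$ by the ``slot'' of its $B$-producing subrun, together with the flow element of the single-position interval around $p_k$.

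Next, I would apply an iterated idempotent-factor argument to the sequence of flows $\flw[\r]{[p_k,p_{k+1})}$, of the kind underlying Simon's factorization theorem (or its refinement \emph{\`a la} the Ismael-bound referenced in the paper). Because $|M_T|\le\fbound$ and each flow has at most $2K$ edges, and since $N>\fbound^{2K}$, iterating this argument once per edge slot would yield indices $i<j$ such that $J=[p_i,p_j)\subseteq I$ is a loop of $\r$, i.e.~$\flw[\r]{J}$ is idempotent, and $J$ still contains many positions of $\orig(B)$ whose $B$-producing subruns all share a common, fixed slot.

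The crux is then to promote such a $B$-producing subrun to a productive \emph{straight} edge of $\flw[\r]{J}$. Concretely, I would argue that, on a sufficiently large monochromatic sub-collection, at least one $B$-producing subrun through some $p\in\orig(B)\cap J$ must reach both endpoints of $J$ and therefore witness an $\lft\rgt$ or $\rgt\lft$ edge of the flow, rather than contribute only to a short $\lft\lft$ or $\rgt\rgt$ turn-around confined to $J$. Once this is secured, the witnessing subrun emits at least one letter of $B$, whence $\out{J}\cap B\ne\emptyset$ and the lemma follows.

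The main obstacle will be exactly this last step. Extracting a loop from a long enough interval is routine via the flow monoid, but forcing some $B$-producing subrun to cross the whole loop is delicate: a priori, every such subrun could be very short and lie entirely inside $J$ as a non-straight edge. The role of the exponent $2K$ in $\obound=\fbound^{2K}$ is precisely to defeat this scenario, by providing enough room to peel off, one at a time, each of the at most $2K$ non-straight edge slots until some $B$-producing subrun is compelled to span all of $J$.
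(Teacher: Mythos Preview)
Your plan is aligned with the paper's approach: the paper does not spell out a proof but only points to ``algebraic properties of the flow monoid'' (the Ismael bound) and to Theorem~7.2 of \cite{bgmp18}. Your outline---enumerate the origins of $B$, extract an idempotent factor among the intermediate flows, and then argue that some $B$-producing subrun witnesses a straight edge of the resulting loop---is exactly the shape of those referenced arguments.

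The gap you flag is real, but your diagnosis of it is slightly off. A subrun witnessing an edge of $\flw[\r]{J}$ always spans $J$: its endpoints sit at the boundary cuts of $J$, so it is never ``very short'' or ``entirely inside $J$''. The actual danger is that every $B$-producing transition lies on a \emph{non-straight} ($\lft\lft$ or $\rgt\rgt$) edge of the loop, and such an edge may still reach arbitrarily deep into $J$. Your peeling heuristic does not obviously terminate against this: passing to a smaller subinterval yields a \emph{different} flow with its own set of up to $K$ edges, so there is no evident monotone quantity being consumed at each step, and no clear reason why $2K$ rounds suffice. The cited references close this gap not by peeling intervals but by enriching the monoid---tracking, alongside the flow, which edges carry $B$-output---and then exploiting the component structure of idempotent flows (each component contains a unique straight edge, and paths in the juxtaposition $EE$ stay within one component; these are exactly the facts from \cite{bgmp18} invoked in the appendix proof of Lemma~\ref{lem:order-preservation}). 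In that enriched monoid, a non-trivially marked idempotent is forced to mark a straight edge. Recasting your last step along these lines, rather than via the peeling sketch, is what would make the argument go through.
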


We need some more notations for outputs.
Given an input interval $I$  we denote by 
$\out[\r]{I}$ 
the set of output positions whose origins belong to $I$ (note that 
this might not be an output interval). 
An \emph{output block} of $I$ is a maximal interval contained in $\out[\r]{I}$.

The \emph{dominant output interval} of $I$, denoted $\bout[\r]{I}$,
is the smallest output interval that contains all $\obound$-large output
blocks of $I$. In particular, $\bout[\r]{I}$ either is empty or begins 
with the first $\obound$-large output block of $I$ and ends with
the last $\obound$-large outblock block of $I$.
We will often omit the subscript $\r$ 
from the notations $\flw[\r]{I}$, $\out[\r]{I},\bout[\r]{I}$, etc., when no confusion arises.
\label{p:bigout}

We now fix a successful run $\r$ of the $K$-visiting transducer $T$. 
The rest of the section presents some technical lemmas that will be
used in the inductive constructions for the proof of the main theorem. 
\emph{In the lemmas below, we assume that all successful runs of $T$ 
(in particular, $\r$) avoid inversions.}

\begin{restatable}{lemma}{Order}\label{lem:order}
Let $I_1 < I_2$ be two input intervals and $B_1,B_2$ 
output blocks of $I_1$, $I_2$, respectively.
If both $B_1,B_2$ are $\obound$-large, then $B_1<B_2$. 
\end{restatable}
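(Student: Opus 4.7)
The plan is to proceed by contradiction. First I would observe that $B_1$ and $B_2$ are disjoint intervals of output positions: both are intervals by definition of an output block, and they are disjoint because their origins lie in the disjoint input intervals $I_1 < I_2$, so $\out{I_1} \cap \out{I_2} = \emptyset$. Two disjoint intervals of positions are linearly ordered, so either $B_1 < B_2$ (the desired conclusion) or $B_2 < B_1$. I would assume the latter and aim to construct an inversion of $\r$, contradicting the standing hypothesis that no successful run of $T$ has inversions.

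Next I would invoke Lemma~\ref{lem:bound} twice. Since $B_1 \subseteq \out{I_1}$ is $\obound$-large, the lemma yields a loop $J_1 \subseteq I_1$ of $\r$ together with a productive straight edge $e_1$ in $\flw{J_1}$ whose witnessing subrun $\r_1$ intersects $B_1$; in particular, $\r_1$ produces some output position $y_1 \in B_1$. Applied symmetrically to $B_2$, the lemma provides a loop $J_2 \subseteq I_2$, a productive straight edge $e_2$ in $\flw{J_2}$, and a witnessing subrun $\r_2$ producing some $y_2 \in B_2$. Note that $J_1 < J_2$ since $J_1 \subseteq I_1 < I_2 \supseteq J_2$.

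The core step is then to determine the relative order of $\r_1$ and $\r_2$ in $\r$. Because $\r_1$ spans over $J_1$, every transition of $\r_1$ reads a position inside $J_1$; analogously, every transition of $\r_2$ reads a position inside $J_2$. Since $J_1$ and $J_2$ are disjoint, the subruns $\r_1$ and $\r_2$ visit disjoint sets of positions, and therefore occupy disjoint contiguous time intervals in $\r$; they are thus totally ordered in time. The assumption $B_2 < B_1$ forces $y_2 < y_1$ in the output order, so the transition producing $y_2$ occurs at an earlier time step than the one producing $y_1$, which pins down $\r_2$ as preceding $\r_1$ in the run order.

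At this point all clauses of Definition~\ref{def:inversion} are satisfied by $(J_1, e_1, J_2, e_2)$: $J_1 < J_2$ are loops of $\r$, $e_1, e_2$ are productive straight edges of the corresponding flows, and the witnessing subrun of $e_2$ precedes the witnessing subrun of $e_1$. Hence this tuple is an inversion of $\r$, contradicting the assumption and completing the proof. The one place that requires care, though it is not deep, is the time-disjointness of $\r_1$ and $\r_2$: it rests on the fact that a subrun spanning over an interval keeps its head inside that interval, so two spanning subruns over disjoint intervals can never coexist at the same time step and must therefore be linearly ordered in time.
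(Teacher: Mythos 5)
Your proof is correct and follows the paper's own argument essentially step by step: assume $B_2 < B_1$ for contradiction, apply Lemma~\ref{lem:bound} twice to extract loops $J_1 \subseteq I_1$, $J_2 \subseteq I_2$ with productive straight edges $e_1, e_2$ whose witnessing subruns intersect $B_1, B_2$ respectively, and exhibit $(J_1, e_1, J_2, e_2)$ as an inversion. The only difference is cosmetic: the paper states "since $B_1>B_2$, the subrun witnessing $e_1$ follows the subrun witnessing $e_2$" without elaboration, whereas you spell out the underlying reason (the witnessing subruns read disjoint input regions $J_1$ and $J_2$, hence occupy disjoint contiguous time intervals, and the temporal order of outputs then forces the temporal order of the subruns). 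This is a reasonable amount of added detail and does not change the nature of the argument.
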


\begin{proof}[sketch]
If the claim would not hold, then Lemma \ref{lem:bound} would provide
some loops $J_1\subseteq I_1$ and $J_2\subseteq I_2$, together with  some
productive edges in them, witnessing an inversion.
\qed
\end{proof}

\begin{restatable}{lemma}{Bin}\label{lem:bin}
Let $I=I_1\juxt I_2$, $B=\bout{I}$, and $B_i=\bout{I_i}$ for $i=1,2$. 
Then $B\setminus(B_1\cup B_2)$ is $4K\obound$-small.
\end{restatable}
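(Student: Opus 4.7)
The plan is to decompose $E := B \setminus (B_1 \cup B_2)$ by origin class and bound each contribution to $|\orig(E)|$ separately.

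I would begin by establishing the skeleton $B_1 \cup B_2 \subseteq B$: any $\obound$-large output block of $I_i$ has more than $\obound$ origins lying in $I_i \subseteq I$, so it is contained in some $\obound$-large output block of $I$, which in turn lies in $B$. Applying Lemma~\ref{lem:order} to $I_1 < I_2$ then gives that all big blocks of $I_1$ strictly precede all big blocks of $I_2$, so $B_1$ entirely precedes $B_2$ as intervals. Consequently, $E$ splits into three disjoint intervals $L = [p, p_1)$, $M = (q_1, p_2)$, $R = (q_2, q]$, where $B = [p, q]$ and $B_i = [p_i, q_i]$.

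Next, I would partition positions of $E$ by origin class: $E = E_1 \cup E_2 \cup E_c$, where $E_i := E \cap \out{I_i}$ for $i \in \{1,2\}$ and $E_c := E \cap \out{I^c}$, with $I^c := \dom(u) \setminus I$. For each $E_i$, every output block of $I_i$ meeting $E$ must be \emph{small} (at most $\obound$ origins), because all big blocks of $I_i$ are contained in $B_i$, which is disjoint from $E$. For $E_c$, I would apply Lemma~\ref{lem:order} to the pairs $I^- < I$ and $I < I^+$, where $I^\pm$ are the two components of $I^c$, to conclude that every big block of $I^c$ lies strictly outside $B$. Hence every output block of $I^c$ meeting $E \subseteq B$ is likewise small.

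The final step bounds the number of small blocks involved, using the $K$-visit property. Each output block corresponds to a maximal continuous stay of the head in one of the four regions $I^-, I_1, I_2, I^+$, and consecutive stays are separated by crossings of the three boundaries delimiting these regions. Since each boundary position is visited at most $K$ times, the number of such boundary crossings, and hence the total number of small blocks contributing to $E$ across the three origin classes, is bounded by $4K$. Each small block contributes at most $\obound$ origins, yielding the desired bound $|\orig(E)| \leq 4K\obound$.

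The main obstacle I expect is the tight counting in this last step: one must verify that the relevant small-block count across $E_1$, $E_2$, and $E_c$ fits precisely within the budget $4K$, rather than a looser $O(K)$ bound, by identifying each small block with a specific boundary crossing and invoking the $K$-visit constraint once per boundary. The key intuition is that Lemma~\ref{lem:order} forces big blocks into a clean ordered pattern, so the only ``leftover'' positions in $E$ come from a controlled number of head excursions across the boundaries of $I_1$ and $I_2$.
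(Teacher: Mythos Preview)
Your approach is correct and essentially mirrors the paper's: both arguments decompose $E$ according to which of the four regions $I^-, I_1, I_2, I^+$ contains the origin, show that every block meeting $E$ is $\obound$-small via the inversion-free hypothesis, and then bound the total number of such blocks by $4K$ using the $K$-visit property. The one noteworthy variation is that for the outside regions $I^\pm$ the paper appeals to an auxiliary Lemma~\ref{lem:between}, whereas you invoke Lemma~\ref{lem:order} directly on the pairs $I^- < I$ and $I < I^+$; this is a slightly more streamlined route to the same conclusion, and your final counting (at most $K$ blocks per region, hence $4K$ total) is cleanly justified by the flow structure---each region's flow has at most $K$ edges.
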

      
\begin{proof}[sketch]
By Lemma \ref{lem:order}, $B_1 < B_2$.
Moreover, all $\obound$-large output blocks of $I_1$ or $I_2$ 
are also $\obound$-large output blocks of $I$, so $B$ contains both $B_1$ and $B_2$.
Suppose, by way of contradiction, that $B\setminus(B_1\cup B_2)$ is
$4K\obound$-large. 
This means that there is a $2K\obound$-large set $S\subseteq B\setminus(B_1\cup B_2)$ 
with origins entirely to the left of $I_2$, or entirely to the right
of $I_1$.
Suppose, w.l.o.g., that the former case holds, and decompose $S$ as a union of 
maximal output blocks $B'_1,B'_2,\dots,B'_n$ with origins either
entirely inside $I_1$, or entirely outside.
Since $S\cap B_1 = \emptyset$,  every block $B'_i$ 
with origins inside $I_1$ is $\obound$-small.
\longshort{
Similarly, by Lemma \ref{lem:between} in Appendix~\ref{appendix:B}, 
}{
Similarly, one can prove that
}
every block $B'_i$ with origins outside $I_1$ is $\obound$-small too.
Moreover, since $\r$ is $K$-visiting, we get
$n \le 2K$.
Altogether, this contradicts the assumption that $S$ is $2K\obound$-large.
\qed
\end{proof}

\begin{restatable}{lemma}{Idem}\label{lem:idem}
Let $I = I_1\juxt I_2\cdots I_n$, such that $I$ is a loop
and $\flw{I}=\flw{I_k}$ for all $k$.  
Then $\bout{I}$ can be decomposed as $B_1 \juxt J_1 \juxt B_2 \juxt J_2 \juxt \dots \juxt J_{n-1} \juxt B_n$, 
where
\begin{enumerate}
  \item for all $1 \le k \le n$, $B_k=\bout{I_k}$ (with $B_k$ possibly empty);
  \item for all $1 \le k <n$, the positions in $J_k$ have origins 
		inside $I_k \cup I_{k+1}$ and 
		$J_k$ is $2K\obound$-small. 
\end{enumerate}
\end{restatable}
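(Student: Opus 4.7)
The plan is to combine Lemma~\ref{lem:order} (which orders the dominant intervals $B_k$) with the fact that each $I_k$, and indeed every contiguous range $I_a \juxt \cdots \juxt I_b$, is a loop of $\r$ with the common idempotent flow $F = \flw{I_k}$ (since $F \cdot F = F$), using the no-inversion hypothesis to rule out undesired configurations.

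First I would establish claim (1). By Lemma~\ref{lem:order}, the non-empty $B_k$'s satisfy $B_1 < B_2 < \cdots < B_n$, and every $\obound$-large output block of $I_k$ is contained in an $\obound$-large output block of $I$ (since a block of $I_k$ is part of a block of $I$ of at least as large an origin span), so $B_k \subseteq \bout{I}$. It then remains to verify that $\bout{I}$ does not extend strictly to the left of $B_1$ or to the right of $B_n$: any extra $\obound$-large $I$-block appearing outside the $B_k$-range would, by Lemma~\ref{lem:bound}, yield a productive straight edge in some loop $J \subseteq I$, which paired with the productive straight edge extracted from the adjacent $B_k$ (again via Lemma~\ref{lem:bound}) and with positions tracked by Lemma~\ref{lem:order} would form an inversion of $\r$. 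Together with the $B_k \subseteq \bout{I}$ chain, this gives the decomposition $\bout{I} = B_1 \juxt J_1 \juxt B_2 \juxt \cdots \juxt J_{n-1} \juxt B_n$.

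Next, for claim (2), I would first show that the positions in $J_k$ have origin in $I_k \cup I_{k+1}$. Suppose for contradiction that $x \in J_k$ has origin in $I_j$ with $j > k+1$ (the case $j < k$ is symmetric). Then $x$ lies in an output block $B'$ of $I_j$ which must be $\obound$-small, since otherwise $B' \subseteq B_j$ and Lemma~\ref{lem:order} would force $B' > B_{k+1}$, while $x \in J_k$ imposes $B' \le B_{k+1}$. Applying Lemma~\ref{lem:bound} to $B_{k+1}$ yields a loop inside $I_{k+1}$ carrying a productive straight edge $e$ whose witnessing subrun intersects $B_{k+1}$. Since $B' < B_{k+1}$ in the output, and outputs are produced in temporal order along $\r$, the subrun producing $B'$ precedes the witness of $e$. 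It remains to exhibit a productive straight edge in the loop $I_j$ whose witness lies at or before the $B'$-subrun: this comes either directly from the subrun producing $B'$ (if the corresponding edge of $F = \flw{I_j}$ is straight) or, exploiting that $F$ is the common flow of every $I_k$ and that productive straight edges of $F$ must appear in every loop $I_m$ instance, from a productive straight witness inside $I_j$ extractable via Lemma~\ref{lem:bound} applied to an $\obound$-large $I_j$-block. The resulting pair of loops $I_{k+1} < I_j$ with productive straight edges witnessed in inverted temporal order is an inversion, contradicting the hypothesis.

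Finally, for $|\orig(J_k)| \le 2K\obound$, I would use that $I_k \juxt I_{k+1}$ is a loop with flow $F \cdot F = F$, so has at most $2K$ productive subruns spanning it (one per productive edge of the flow). These are the only subruns producing output with origin in $I_k \cup I_{k+1}$, and hence account for all positions in $J_k$. Removing the contributions lying in $B_k \cup B_{k+1}$, what remains consists of $\obound$-small blocks attached to these at most $2K$ productive spanning subruns, contributing at most $2K \cdot \obound$ distinct origins. The main obstacle is the inversion argument in the origin-confinement step: because $B'$ is only $\obound$-small, Lemma~\ref{lem:bound} does not apply to it directly, and one must carefully leverage the common flow $F$ across all $I_k$ together with the temporal ordering of outputs to supply the required productive straight witness in $I_j$ and obtain the sought inversion.
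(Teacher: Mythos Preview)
Your proposal has a genuine gap, which you yourself flag at the end: the origin-confinement step in claim (2). You want to show that no position $x\in J_k$ can have origin in some $I_j$ with $j>k+1$, and your plan is to exhibit an inversion between a loop inside $I_{k+1}$ (from Lemma~\ref{lem:bound} applied to $B_{k+1}$) and the loop $I_j$. But to form an inversion you need a productive \emph{straight} edge in $\flw{I_j}$ whose witnessing subrun occurs in $\r$ before the subrun intersecting $B_{k+1}$. The block $B'$ containing $x$ is $\obound$-small, so Lemma~\ref{lem:bound} does not apply to it; and your fallback of applying Lemma~\ref{lem:bound} to some $\obound$-large block of $I_j$ gives you a productive straight edge somewhere in $I_j$, but with no control over its temporal position relative to the subrun intersecting $B_{k+1}$. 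So the inversion you need does not materialize. A similar issue affects your boundary argument for claim (1): showing that $\bout{I}$ does not extend strictly past $B_1$ or $B_n$ via an ad hoc inversion is delicate, because an $\obound$-large block of $I$ at the boundary may have origins spread across several $I_k$'s and need not directly yield the right pair of straight edges.

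The paper circumvents these difficulties by first proving a structural lemma about idempotent flows (Lemma~\ref{lem:straight}): under the no-inversion hypothesis, the idempotent flow $F=\flw{I}$ has \emph{at most one} productive straight edge, and that edge is $\lft\rgt$. Combined with the observation that all other productive edges of $F$ get absorbed into this unique straight edge when copies of $F$ are concatenated, one obtains a complete, explicit description of the output blocks of $I$: a single ``long'' block of the shape $B'_1\juxt J'_1\juxt\cdots\juxt J'_{n-1}\juxt B'_n$ (where $B'_k$ is the output of the straight edge in $I_k$ and each $J'_k$ collects outputs of non-straight edges with origins in $I_k\cup I_{k+1}$), preceded and followed by at most $2K$ blocks whose origins lie entirely in $I_1$, respectively $I_n$. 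From this picture both the origin confinement and the $2K\obound$-smallness of the $J_k$ follow by inspection: each $J_k$ consists of at most $K$ $\obound$-small leftover blocks of $I_k$ and at most $K$ of $I_{k+1}$. The key idea you are missing is this single-productive-straight-edge lemma; once you have it, no further inversion arguments are needed in the proof of Lemma~\ref{lem:idem} itself.
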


\begin{proof}[sketch]
The proof idea is similar to the previous lemma. First, using
properties of idempotent flows, one shows that 
all output positions strictly between $B_k$ and $B_{k+1}$,
for any $k=1,\dots,n-1$, have origin in $I_k\cup I_{k+1}$.
Then, one observes that every output block of $I_k$ disjoint
from $B_k$ is $\obound$-small, and since 
\linebreak
\mywraptext{5.1cm}{
$T$ is $K$-visiting
there are at most $K$ such blocks. This shows that every output 
interval $J_k$ between $B_k$ and $B_{k+1}$ is
$2K\obound$-small. 
For an illustration see the figure 
to the right. 
The $\obound$-large blocks in $I_1$ are shown in red; in blue those for $I_2$,
in purple those for $I_3$. So $\bout{I_1}$
is the entire output between the two red dots, 
$\bout{I_2}$ between the two blue dots, and $\bout{I_3}$ between the
purple dots. All three blocks are non-empty, and 
$\bout{I_1 \cdot I_2 \cdot I_3}$ goes from the first red to 
the second purple dot. Black non-dashed arrows stand for $\obound$-small blocks.
\qed
}
\mywrapfig{5.1cm}{
\vspace{0mm}%
\hfill%
\scalebox{0.65}{
\begin{tikzpicture}[baseline=0, inner sep=0, outer sep=0, minimum size=0pt]
  \tikzstyle{dot} = [draw, circle, fill=white, minimum size=4pt]
  \tikzstyle{fulldot} = [draw, circle, fill=black, minimum size=4pt]
  \tikzstyle{grayfactor} = [->, shorten >=1pt, rounded corners=5, dashed]
  \tikzstyle{factor} = [->, shorten >=1pt, rounded corners=5]
  \tikzstyle{dotfactor} = [->, shorten >=1pt, dotted, rounded corners=5]
  \tikzstyle{fullfactor} = [->, >=stealth, shorten >=1pt, very thick, rounded corners=5]
  \tikzstyle{dotfullfactor} = [->, >=stealth, shorten >=1pt, dotted, very thick, rounded corners=5]

\begin{scope}[xshift=5cm, yshift=-0.3cm, xscale=0.45, yscale=0.39]
  \tikzstyle{factor} = [->, shorten >=1pt, dotted, rounded corners=4]
  \tikzstyle{fullfactor} = [->, >=stealth, shorten >=1pt, very thick, rounded corners=4]

  \fill [pattern=north east lines, pattern color=gray!25]
        (4,-1) rectangle (8,17);
  \fill [pattern=north east lines, pattern color=gray!25]
        (8,1) rectangle (12,17);
  \fill [pattern=north east lines, pattern color=gray!25]
        (12,3) rectangle (16,17);
  \draw [dashed, gray] (4,-1) -- (4,17);
  \draw [dashed, gray] (8,-1) -- (8,17);
  \draw [dashed, gray] (12,-1) -- (12,17);
  \draw [dashed, gray] (16,-1) -- (16,17);
  \draw [gray] (4,-1.25) -- (4,-1.5) -- (7.9,-1.5) -- (7.9,-1.25);
  \draw [gray] (8.1,-1.25) -- (8.1,-1.5) -- (12,-1.5) -- (12,-1.25);
  \draw [gray] (12.1,-1.25) -- (12.1,-1.5) -- (15.9,-1.5) -- (15.9,-1.25);
  \draw [gray] (6,-1.75) node [below] {\footnotesize $I_1$};
  \draw [gray] (10,-1.75) node [below] {\footnotesize $I_2$};
  \draw [gray] (14,-1.75) node [below] {\footnotesize $I_3$};

 \draw (2,0) node (node0) {};
  \draw (4,0) node [dot, draw=black] (node1) {};

  \draw (6,0) node (node2) {};
  \draw (6,1) node (node3) {};
  \draw (4,1) node [dot, draw=black] (node4) {};
  \draw (2,1) node (node5) {};
  \draw (2,2) node (node6) {};

 \draw (1.5,2) node (node01) {};
  \draw (4,2) node [dot, draw=black] (node111) {};
  
 \draw (1.5,2) node (node01) {};
  \draw (4,2) node [dot, red] (node111) {};
  
  \draw (6,2) node (node211) {};
  \draw (6,3) node (node311) {};
  \draw (4,3) node [dot, draw=black] (node411) {};

  \draw (2,3) node (node511) {};
  \draw (2,4) node (node611) {};

  \draw (4,4) node [fulldot, black] (node7) {};
  \draw (8,4) node [dot, draw=black] (node8) {};

  \draw (10,4) node (node9) {};
  \draw (10,5) node (node10) {};
  \draw (8,5) node [dot, draw=black] (node11) {};
  \draw (6,5) node  (node12) {};
  \draw (6,6) node (node13) {};
  \draw (8,6) node [dot, red] (node1111) {};
  \draw (8,7) node [dot, draw=black] (node1411) {};
  \draw (6,7) node  (node1511) {};
  \draw (6,8) node (node1611) {};
   \draw (8,8) node [fulldot,blue] (node14) {};

  \draw (10,5) node (node911) {};
  \draw (10,6) node (node1011) {};
  \draw (10,6) node  (node1211) {};
  \draw (10,7) node (node1311) {};

  \draw (12,8) node [dot, blue] (node15) {};
  \draw (14,8) node  (node16) {};
  \draw (14,9) node  (node17) {};
  \draw (12,9) node [dot, draw=black] (node18) {};
  \draw (10,9) node (node19) {};
  \draw (10,10) node  (node20) {};
  \draw (12,10) node [dot, violet] (node2111) {};
  \draw (14,10) node (node1911) {};
  \draw (14,11) node  (node2011) {};

 \draw (12,11) node [dot, draw=black] (node3111) {};
 
 \draw (10,11) node  (node3112) {};
 
 \draw (10,12) node  (node3113) {};

  \draw (12,12) node [fulldot, black] (node21) {};
  \draw (16,12) node [dot, violet] (node22) {};
  \draw (18,12) node (node23) {};
  \draw (18,13) node  (node24) {};
  \draw (16,13) node [dot, draw=black] (node25) {};
  \draw (14,13) node  (node26) {};
  \draw (14,14) node (node27) {};
  \draw (16,14) node [dot, draw=black] (node28) {};

    \draw (18,14) node  (node2900) {};
  \draw (18,15) node (node2911) {};

  \draw (16,15) node [dot, draw=black] (node30) {};
  \draw (16,16) node [dot, draw=black] (node31) {};
  
  \draw (14,15) node (node3000) {};
  \draw (14,16) node  (node3100) {};
  
  \draw (18,16) node (node3133) {};
    \draw [grayfactor] (node31) -- (node3133);

  \draw [grayfactor] (node0) -- (node1);
  \draw [fullfactor] (node1) -- (node2.center) -- node [below right=0.5mm] {}
                           (node3.center) -- (node4); 
                           
  \draw [fullfactor, red] (node111) -- (node211.center) -- node [below right=0.5mm] {}
                           (node311.center) -- (node411);

  \draw [grayfactor] (node4) -- (node5.center) -- (node6.center) -- (node111);

  \draw [grayfactor] (node411) -- (node511.center) -- (node611.center) -- (node7);
  \draw [fullfactor] (node7) -- node [below=.7mm] {} (node8);

  \draw [fullfactor, black] (node8) -- (node9.center) -- node [below right=0.5mm] {}
                           (node10.center) -- (node11);
  \draw [fullfactor, red] (node11) -- (node12.center) -- node [below left=0.5mm] {}
                           (node13.center) -- (node1111);
                           
                             \draw [fullfactor] (node1111) -- (node1211.center) -- node [below right=0.5mm] {}
                           (node1311.center) -- (node1411);
                           
                           \draw [fullfactor, black] (node1411) -- (node1511.center) -- node [below right=0.5mm] {}
                           (node1611.center) -- (node14);

  \draw [fullfactor, blue] (node14) -- node [below=.7mm] {} (node15);

  \draw [fullfactor] (node15) -- (node16.center) -- node [below right=0.5mm] {}
                           (node17.center) -- (node18);
  \draw [fullfactor, black] (node18) -- (node19.center) -- node [below left=0.5mm] {}
                           (node20.center) -- (node2111);
                           
                            \draw [fullfactor, violet] (node2111) -- (node1911.center) -- node [below left=0.5mm] {}
                           (node2011.center) -- (node3111);

\draw [fullfactor, black] (node3111) -- (node3112.center) -- node [below left=0.5mm] {}
                           (node3113.center) -- (node21);
                           
                          \draw [fullfactor, black] (node30) -- (node3000.center) -- node [below left=0.5mm] {}
                           (node3100.center) -- (node31);

  \draw [fullfactor, violet] (node21) -- node [below=.7mm] {} (node22);
  \draw [grayfactor] (node22) -- (node23.center) -- (node24.center) -- (node25); 
  \draw [fullfactor, black] (node25) -- (node26.center) -- node [below left=0.5mm] {}
                           (node27.center) -- (node28);
\draw [grayfactor] (node28) -- (node2900.center) -- (node2911.center) -- (node30);

\end{scope}
\end{tikzpicture}
}
\end{proof}


	\section{Proof of Theorem~\ref{thm:oneway-def}}
\label{sec:main}

This section is devoted to proving the characterization of
one-way resynchronizability in the bounded-visit case. 
We will use the notion of \emph{bounded-traversal} 
from~\cite{KM20}, that was shown to characterize the class of
bounded regular resynchronizers, in as much as bounded-delay
characterizes rational resynchronizers~\cite{fjlw16icalp}.

\begin{definition}[traversal \cite{KM20}] 
Let $\s=(u,v,\orig)$ and $\s'=(u,v,\orig')$ be two synchronized pairs with 
the same input and output words. 

Given two input positions $y,y' \in\dom(u)$, 
we say that \emph{$y$ traverses $y'$} if there is a pair $(y,z)$
of source and target origins associated with the same output position
such that $y'$ is between $y$ and $z$,
with $y'\neq z$ and possibly $y'=y$.
More precisely:
\begin{itemize}
  \item $(y,y')$ is a \emph{left-to-right traversal} if
        $y\le y'$ and for some output position $x$, $\orig(x)=y$ and $z=\orig'(x)>y'$;
  \item $(y,y')$ is a \emph{right-to-left traversal} if
        $y\ge y'$ and for some output position $x$, $\orig(x)=y$ and $z=\orig'(x)<y'$.
\end{itemize}

A pair  $(\s,\s')$ of synchronized pairs with input $u$ and output $v$ 
is said to have \emph{$k$-bounded traversal}, with $k\in\Nat$,
if every $y' \in \dom(u)$ is traversed by at most $k$ distinct 
positions of $\dom(u)$. 

A resynchronizer $\Rr$ has \emph{bounded traversal} 
if there is some $k \in\Nat$ such that every 
$(\s,\s') \in \Rr$ has $k$-bounded traversal.
\end{definition}

\begin{lemma}[\cite{KM20}]\label{lem:bounded-traversal}
A regular resynchronizer is bounded if and only if it has bounded traversal. 
\end{lemma}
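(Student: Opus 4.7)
The plan is to prove the two implications of the equivalence separately.

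For the direction \emph{bounded traversal $\Rightarrow$ bounded}, I would use a short combinatorial argument. Assuming $\Rr$ has $k$-bounded traversal, I fix an annotated input $u'$, an output type $\tau$, and a target $z$, and examine the set of sources $y$ satisfying $\move_\tau(y,z)$ that are realized in some pair $(\s,\s') \in \Rr$ by an output position of type $\tau$. Any such $y<z$ would yield a left-to-right traversal of $y'=z-1$, and any such $y>z$ would yield a right-to-left traversal of $y'=z+1$; by the bounded traversal hypothesis, there are at most $k$ sources of each kind. Together with at most one source $y=z$, this shows that at most $2k+1$ distinct sources are mapped to $z$ for a given type $\tau$. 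Since arcs that are never realized in $\Rr$ can be dropped from $\move_\tau$ without changing the resynchronization, this establishes $(2k+1)$-boundedness of $\Rr$.

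For the converse \emph{bounded $\Rightarrow$ bounded traversal}, I would argue by contradiction, using an MSO pumping argument. Assuming $\Rr$ is $k$-bounded but not bounded-traversal, for every $N$ there would exist a pair $(\s,\s')\in \Rr$ and a position $y'$ traversed by more than $N$ distinct sources. Pigeonholing on the finite sets of output types and on the two traversal directions, one extracts arbitrarily many traversals of a single type $\tau$ going, say, left-to-right, yielding sources $y_1 < \dots < y_M \le y'$ and targets $z_1,\dots,z_M > y'$ with $(y_i,z_i) \models \move_\tau$ in the annotated input $u'$. For $M$ much larger than the number of MSO types of appropriate quantifier rank, an Ehrenfeucht--Fra\"iss\'e/Ramsey argument on the MSO types of the input factors around the pairs $(y_i,z_i)$ would yield two indices $i<j$ whose surrounding factors have matching types, allowing a pumping operation. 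Iterating the pump a sufficient number of times produces a longer annotated input $\tilde u'$ still satisfying $\ipar$, in which more than $k$ distinct sources satisfy $\move_\tau(y,\tilde z)$ for a single target $\tilde z$, contradicting $k$-boundedness.

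The hardest part will be arranging the pumping so that a conflict appears at a \emph{single} target rather than merely generating more disjoint arcs at distinct targets. Overcoming this obstacle requires a joint Ramsey-type argument over both source and target contexts, exploiting the finite number of MSO pair-types admitted by $\move_\tau$. A detailed construction along these lines is given in \cite{KM20}, and its adaptation to the present setting of regular resynchronizers is routine.
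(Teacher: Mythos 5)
The paper does not prove this lemma; it is imported from \cite{KM20} and used as a black box (the paper only applies the direction ``bounded $\Rightarrow$ bounded traversal'' in the proof of Theorem~\ref{thm:oneway-def}). So there is no in-paper argument to compare against; your proof has to stand on its own.

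Your argument for ``bounded traversal $\Rightarrow$ bounded'' has a real gap. Bounded traversal is a \emph{per-pair} property: each fixed $(\sigma,\sigma')\in\Rr$ has at most $k$ sources traversing a given $y'$. But the sources $y$ you collect for a fixed triple $(u',\tau,z)$ may be realized in \emph{different} pairs $(\sigma,\sigma')$, one arc per pair, in which case the per-pair bound does not constrain their number at all. To repair this you would need to show that many arcs sharing a target $z$ can be witnessed jointly by a single synchronized pair, which is far from automatic: the combined output would have to satisfy $\opar$ and all the $\nxt_{\tau,\tau'}$ constraints, and the separate witnessing outputs need not be compatible. Moreover, ``dropping arcs that are never realized'' from $\move_\tau$ is not obviously sound: realizability quantifies over whole annotated output words, which are not part of the input structure $u'$ over which $\move_\tau$ is evaluated, so the pruned relation need not be MSO-definable; and in any case boundedness as defined in Section~\ref{ssec:regresync} is a property of the given formulas $\move_\tau$, not of some equivalent resynchronizer.

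For the converse direction, your Ramsey/pumping sketch correctly identifies the shape of the argument and the central obstacle (getting the pump to concentrate sources onto a single target $\tilde z$ rather than scattering new arcs across distinct targets), but you explicitly defer that obstacle to \cite{KM20}. As submitted this is an accurate map of where the difficulty lies, not a proof of the statement.
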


\begin{proof}[of Theorem \ref{thm:oneway-def}]
First of all, observe that the implication $4\rightarrow 1$ is straightforward.
To prove the implication $1\rightarrow2$, assume that there is a 
$k$-bounded, regular resynchronizer $\Rr$ that is $T$-preserving 
and such that $\Rr(T)$ is order-preserving.
Lemma \ref{lem:bounded-traversal} implies that $\Rr$ has $t$-bounded
traversal, for some constant $t$. 
We head towards proving that $T$ has cross-width bounded by $t+k$. 
Consider two synchronized pairs $\s=(u,v,\orig)$ 
and $\s'=(u,v,\orig')$ such that $\s\in \sem{T}_o$ and $(\s,\s') \in \Rr$, 
and consider a cross $(X_1,X_2)$ of $\s$. 
We claim that $|\orig(X_1)|$ or $|\orig(X_2)|$ is at most $t+k$.
Let $x_1=\min(\orig(X_1))$, $x'_1=\max(\orig'(X_1))$, 
$x_2=\max(\orig(X_1))$, and $x'_2=\min(\orig'(X_2))$. 
Since $(X_1,X_2)$ is a cross, we have $x_1>x_2$,
and since $\s'$ is order-preserving, we have $x'_1 \le x'_2$.
Now, if $x'_1>x_2$, then at least $|\orig(X_2)|-k$  input positions 
from $X_2$ traverse $x'_1$ to the right (the $-k$ term is due to the 
fact that at most $k$ input positions can be resynchronized to $x'_1$). 
Symmetrically, if $x'_1 \le x_2$, then at least $|\orig(X_1)|-k$ input 
positions from $X_1$ traverse $x_2$ to the left (the $-k$ term accounts 
for the case where some positions are resynchronized to $x'_1$ and $x'_1=x_2$). 
This implies $\min(|\orig(X_1)|,|\orig(X_2)|) \le t+k$, as claimed.

The remaining implications rely on the assumption that $T$ 
is bounded-visit.

The implication $2\rightarrow3$ is shown by contraposition: 
one considers a successful run $\r$ with an inversion, and shows
that crosses of arbitrary width emerge after pumping the loops of 
the inversion (here Lemma~\ref{lem:order-preservation} is crucial).

The proof of $3 \rightarrow 4$ is more involved,
we only sketch it here. Assuming that no successful run of $T$ 
has inversions we build a  partially bijective, 
regular resynchronizer $\Rr$ that is $T$-preserving and 
$\Rr(T)$ is order-preserving. 
The resynchronizer $\Rr$ uses some parameters
to guess a successful run $\r$ of $T$ on $u$ and a 
factorization tree of bounded height for $\r$.
Formally, a \emph{factorization tree} for a sequence $\alpha$
of monoid elements (e.g.~the flows $\flow_\r([y,y])$ for all input positions $y$)
is an ordered, unranked tree whose yield is the sequence $\alpha$. \label{p:factorization}
The leaves of the factorization tree are labeled with the elements of $\alpha$. 
All other nodes have at least two children and
are labeled by the monoid product of the child labels (in our case
by the flows of $\r$ induced by the covered factors in the input).
In addition, if a node has more than two children, then all its children 
must have the same label, representing an idempotent element of the monoid. 
By Simon's factorization theorem~\cite{sim72}, every sequence of monoid
elements has some factorization tree of height at most linear in the size 
of the monoid (in our case, at most $3|M_T|$, see e.g.~\cite{col07}). 

\emph{Parameters.} We use input parameters to encode the successful 
run $\r$ and a factorization tree for $\r$ of height at most $H=3|M_T|$.
These parameters specify, for each input interval corresponding to a
subtree, the start and end positions of the interval and the label of 
the root of the subtree.
Correctness of these annotations can be enforced by an MSO sentence 
$\ipar$.
%
The run and the factorization tree also need to be encoded over the
output, using output parameters. More precisely, given a level in the
tree and an output position, we need to be able to determine the flow
and the productive edge that generated that position. 
%
\longshort{
The technical details for checking correctness of the
output annotation using the formulas $\opar$, $\move_\otype$ and
$\nxt_{\otype,\otype'}$ can be found in Appendix~\ref{ap:main}. 
}{
We omit the technical details for checking correctness of the
output annotation using the formulas $\opar$, $\move_\otype$ and
$\nxt_{\otype,\otype'}$.
}

\emph{Moving origins.}
For each level $\ell$ of the factorization tree, a partial resynchronization
relation $\Rr_\ell$ is defined. The relation is partial in the sense that 
some output positions may not have a source-target origin pair defined
at a given level. But once a source-target pair is defined for some
output position at a given level, it remains defined for all higher levels.

In the following we write $\bout{\nd}$ for the dominant output interval
associated with the input interval $I(\nd)$ corresponding to a node $\nd$ 
in the tree.  For every level $\ell$ of the factorization tree, the
resynchronizer $\Rr_\ell$ will be a partial function from source
origins to target origins, and will satisfy the following:
\begin{itemize}
  \item the set of output positions for which $\Rr_\ell$ defines target origins 
        is the union of the intervals $\bout{\nd}$ for all
        nodes $\nd$ at level $\ell$;
  \item $\Rr_\ell$ only moves origins within the same interval at level $\ell$,
        that is, $\Rr_\ell$ defines only pairs $(y,z)$ of source-target origins 
        such that $y,z\in I(\nd)$ for some node $\nd$ at level $\ell$;
  \item the target origins defined by $\Rr_\ell$ are order-preserving
        within every interval at level $\ell$, that is, 
        for all output positions $x<x'$, if $\Rr_\ell$ defines 
        the target origins of $x,x'$ to be $z,z'$, 
        respectively, and if $z,z'\in I(\nd)$ for some node $\nd$ at level $\ell$,
	    then $z\le z'$;
  \item $\Rr_\ell$ is $\ell\cdot4K\obound$-bounded, namely, there are at most
        $\ell\cdot4K\obound$ distinct source origins that are moved by $\Rr_\ell$
        to the same target origin.
\end{itemize}
The construction of $\Rr_\ell$ is by induction on $\ell$. 
For a binary node $\nd$ at level $\ell$ with children
$\nd_1,\nd_2$, the resynchronizer $\Rr_\ell$ inherits the 
source-origin pairs from level $\ell-1$ for output positions
that belong to $\bout{\nd_1}\cup\bout{\nd_2}$. 
Note that $\bout{\nd_1} < \bout{\nd_2}$ by Lemma~\ref{lem:order}, 
so $\Rr_\ell$ is order-preserving inside $\bout{\nd_1} \cup \bout{\nd_2}$. 
Output positions inside 
$\bout{\nd} \setminus (\bout{\nd_1} \cup \bout{\nd_2})$ 
are moved in an order-preserving manner to one of the
extremities of $I(\nd)$, or to the last position of
$I(\nd_1)$. Boundedness of $\Rr_\ell$ is guaranteed 
by Lemma~\ref{lem:bin}.

    \section{Complexity}\label{sec:complexity}

We discuss the effectiveness and complexity of our characterization.
For a  $k$-visit transducer $T$, 
the effectiveness of the characterization relies on detecting inversions 
in successful runs of $T$. It is not difficult to see that this can be
decided in space that is polynomial in the size of $T$ and the
bound $k$.  We can also show that
one-way resynchronizability is \PSPACE-hard. For this we recall that the emptiness problem for two-way finite
automata is \PSPACE-complete. Let $A$ be a two-way automaton accepting
some language $L$, and let $\Sigma$ be a binary alphabet disjoint from that of $L$.
The function $\{(w \cdot a_1\dots a_n, a_n\dots a_1) \mid w\in L,
a_1\dots a_n \in \Sigma^*, n \ge 0\}$
can be realized by a two-way transducer $T$ of size polynomial in $|A|$,
and $T$ is one-way resynchronizable if and only if $L$ is empty.

In the unrestricted case, we showed that one-way resynchronizability
is decidable (Theorem \ref{thm:decidability}).  We briefly outline the
complexity of the decision procedure:
\begin{enumerate}
  \item First one checks
        that $T$ is $K$-sparse for some $K$. 
		To do this, we construct from $T$ the regular language $L$ 
		of all inputs with some positions marked that correspond to 
		origins produced within the same vertical loop. 
		Bounded sparsity is equivalent to having a uniform bound 
		on the number of marked positions in every input from $L$.
        Standard techniques for two-way automata 
        allow to decide this in space that is polynomial
        in the size of $T$. Moreover, this also gives us a
        computable exponential bound to the largest constant 
        $K$ for which $T$ can be $K$-sparse. 
  \item Next, we construct from the $K$-sparse transducer $T$  
        a bounded-visit transducer $T'$ that 
        is classically equivalent to $T$ and has exponential size.
  \item Finally, we decide one-way resynchronizability 
        of $T'$ by detecting  inversions in successful runs of $T'$ (Theorem~\ref{thm:oneway-def}).
\end{enumerate}
Summing up, one can decide one-way resynchronizability of unrestricted 
two-way transducers in exponential space.
It is  open if this bound is optimal.  
We also do not have any interesting bound on the size of the
resynchronizer that witnesses one-way resynchronizability,
both in the bounded-visit case and in the unrestricted case.
Similarly, we lack upper and lower bounds on the size of the 
resynchronized one-way transducers, when these exist.

\section{Conclusions}\label{sec:conclusions}
As the main contribution of this paper,        
we provided a characterization for the subclass of 
two-way transducers that are one-way resynchronizable,
namely, that can be transformed by some bounded, regular 
resynchronizer, into an origin-equivalent one-way transducer.

There are similar definability problems that 
emerge in the origin semantics. For instance, one could ask whether
a given two-way transducer can be resynchronized, through 
some bounded, regular resynchronization, to a relation that 
is origin-equivalent to a first-order transduction.
This can be seen as a relaxation of the first-order 
definability problem in the origin semantics, namely,
the problem of telling whether a two-way transducer is
origin-equivalent to some first-order transduction, shown decidable in \cite{boj14icalp}. 
It is worth contrasting the latter problem with the challenging 
open problem whether a given transduction is equivalent to a first-order 
transduction in the classical setting.

\medskip

\paragraph{Acknowledgments.} 
We thank the FoSSaCS reviewers for their constructive
and useful comments.


	\bibliographystyle{abbrv}
	\bibliography{biblio,transducers}

\begin{thebibliography}{10}

\bibitem{AlurCerny10}
Rajeev Alur and Pavel Cern{\'y}.
\newblock Expressiveness of streaming string transducer.
\newblock In {\em IARCS Annual Conference on Foundation of Software Technology
  and Theoretical Computer Science (FSTTCS'10)}, volume~8 of {\em LIPIcs},
  pages 1--12. Schloss Dagstuhl - Leibniz-Zentrum f{\"u}r Informatik, 2010.

\bibitem{bgmp15}
F{\'e}lix Baschenis, Olivier Gauwin, Anca Muscholl, and Gabriele Puppis.
\newblock One-way definability of sweeping transducers.
\newblock In {\em IARCS Annual Conference on Foundation of Software Technology
  and Theoretical Computer Science (FSTTCS'15)}, volume~45 of {\em LIPIcs},
  pages 178--191. Schloss Dagstuhl - Leibniz-Zentrum f{\"u}r Informatik, 2015.

\bibitem{bgmp18}
F{\'e}lix Baschenis, Olivier Gauwin, Anca Muscholl, and Gabriele Puppis.
\newblock One-way definability of two-way word transducers.
\newblock {\em Logical Methods in Computer Science}, 14(4):1--54, 2018.

\bibitem{boj14icalp}
Mikolaj Boja{\'n}czyk.
\newblock Transducers with origin information.
\newblock In {\em International Colloquium on Automata, Languages and
  Programming (ICALP'14)}, number 8572 in LNCS, pages 26--37. Springer, 2014.

\bibitem{BDGPicalp17}
Mikolaj Boja{\'n}czyk, Laure Daviaud, Bruno Guillon, and Vincent Penelle.
\newblock Which classes of origin graphs are generated by transducers?
\newblock In {\em International Colloquium on Automata, Languages and
  Programming (ICALP'17)}, volume~80 of {\em LIPIcs}, pages 114:1--114:13.
  Schloss Dagstuhl - Leibniz-Zentrum f{\"u}r Informatik, 2017.

\bibitem{bkmpp19}
Sougata Bose, Shankara~Narayanan Krishna, Anca Muscholl, Vincent Penelle, and
  Gabriele Puppis.
\newblock On synthesis of resynchronizers for transducers.
\newblock In {\em International Symposium on Mathematical Foundations of
  Computer Science (MFCS'19)}, volume 138 of {\em LIPIcs}, pages 69:1--69:14.
  Schloss Dagstuhl - Leibniz-Zentrum f{\"u}r Informatik, 2019.

\bibitem{bmpp18}
Sougata Bose, Anca Muscholl, Vincent Penelle, and Gabriele Puppis.
\newblock Origin-equivalence of two-way word transducers is in {PSPACE}.
\newblock In {\em IARCS Annual Conference on Foundations of Software Technology
  and Theoretical Computer Science (FSTTCS'18)}, volume 122 of {\em LIPIcs},
  pages 1--18. Schloss Dagstuhl - Leibniz-Zentrum f{\"u}r Informatik, 2018.

\bibitem{col07}
Thomas Colcombet.
\newblock Factorisation forests for infinite words.
\newblock In {\em Fundamentals of Computation Theory (FCT)}, volume 4639 of
  {\em LNCS}, pages 226--237. Springer, 2007.

\bibitem{CE12}
Bruno Courcelle and Joost Engelfriet.
\newblock {\em Graph Structure and Monadic Second-Order Logic - A
  Language-Theoretic Approach}, volume 138 of {\em Encyclopedia of mathematics
  and its applications}.
\newblock Cambridge University Press, 2012.

\bibitem{DBLP:journals/ijfcs/DartoisJR18}
Luc Dartois, Isma{\"{e}}l Jecker, and Pierre{-}Alain Reynier.
\newblock Aperiodic string transducers.
\newblock {\em Int. J. Found. Comput. Sci.}, 29(5):801--824, 2018.

\bibitem{eh01}
Joost Engelfriet and Hendrik~Jan Hoogeboom.
\newblock {MSO} definable string transductions and two-way finite-state
  transducers.
\newblock {\em ACM Trans. Comput. Log.}, 2(2):216--254, 2001.

\bibitem{EH07}
Joost Engelfriet and Hendrik~Jan Hoogeboom.
\newblock Finitary compositions of two-way finite-state transductions.
\newblock {\em Fundamenta Informaticae}, 80:111--123, 2007.

\bibitem{FGL19}
Emmanuel Filiot, Olivier Gauwin, and Nathan Lhote.
\newblock Logical and algebraic characterizations of rational transductions.
\newblock {\em Logical Methods in Computer Science}, 15(4), 2019.

\bibitem{fgrs13}
Emmanuel Filiot, Olivier Gauwin, Pierre-Alain Reynier, and Fr{\'e}d{\'e}ric
  Servais.
\newblock From two-way to one-way finite state transducers.
\newblock In {\em ACM/IEEE Symposium on Logic in Computer Science (LICS'13)},
  pages 468--477, 2013.

\bibitem{fjlw16icalp}
Emmanuel Filiot, Isma{\"e}l Jecker, Christof L{\"o}ding, and Sarah Winter.
\newblock On equivalence and uniformisation problems for finite transducers.
\newblock In {\em Proc. of nternational Colloquium on Automata, Languages, and
  Programming (ICALP'16)}, number 125 in LIPIcs, pages 1--14. Schloss Dagstuhl
  - Leibniz-Zentrum f{\"u}r Informatik, 2016.

\bibitem{fkt14}
Emmanuel Filiot, Shankara~Narayanan Krishna, and Ashutosh Trivedi.
\newblock First-order definable string transformations.
\newblock In {\em IARCS Annual Conference on Foundations of Software Technology
  and Theoretical Computer Science (FSTTCS'14)}, LIPIcs, pages 147--159.
  Schloss Dagstuhl - Leibniz-Zentrum f{\"u}r Informatik, 2014.

\bibitem{DBLP:journals/iandc/FiliotMRT18}
Emmanuel Filiot, Sebastian Maneth, Pierre{-}Alain Reynier, and Jean{-}Marc
  Talbot.
\newblock Decision problems of tree transducers with origin.
\newblock {\em Inf. Comput.}, 261(Part):311--335, 2018.

\bibitem{gri68}
T.~V. Griffiths.
\newblock The unsolvability of the equivalence problem for lambda-free
  nondeterministic generalized machines.
\newblock {\em J. ACM}, 15(3):409--413, 1968.

\bibitem{iba78siam}
Oscar~H. Ibarra.
\newblock The unsolvability of the equivalence problem for e-free {NGSM}'s with
  unary input (output) alphabet and applications.
\newblock {\em SIAM J. of Comput.}, 7(4):524--532, 1978.

\bibitem{Ismael-bound}
Ismael Jecker.
\newblock Personal communication.

\bibitem{KM20}
Denis Kuperberg and Jan Martens.
\newblock Regular resynchronizability of origin transducers is undecidable.
\newblock In {\em International Symposium on Mathematical Foundations of
  Computer Science (MFCS'20)}, volume 170 of {\em LIPIcs}, pages 1--14. Schloss
  Dagstuhl - Leibniz-Zentrum f{\"u}r Informatik, 2020.

\bibitem{she59}
John~C. Shepherdson.
\newblock The reduction of two-way automata to one-way automata.
\newblock {\em IBM Journal of Research and Development}, 3(2):198--200, 1959.

\bibitem{sim72}
Imre Simon.
\newblock Factorization forests of finite height.
\newblock {\em Theoretical Computer Science}, 72(1):65--94, 1990.

\end{thebibliography}
	
	\longshort{
	\newpage
	\appendix
	\centerline{\bf \Large {Appendix}}

\counterwithin{theorem}{section}
\counterwithin{proposition}{section}
\counterwithin{lemma}{section}
\counterwithin{example}{section}
\counterwithin{remark}{section}

	\section{Proofs from Section~\ref{ssec:regresync}}\label{app:resynchs}

\OneBounded*

\begin{proof}
    	Let $R=(\bar I,\bar O,\ipar,\opar,(\move_\otype)_\otype,(\nxt_{\otype,\otype'})_{\otype,\otype'})$
	be a $k$-bounded, regular resynchronizer. Let $\hat{u}$ and
        $\hat{v}$ be a pair of annotated 
	input and output satisfying $\ipar$ and $\opar$ respectively.
	To construct an equivalent $1$-bounded regular resynchronizer $R'$ 
	we introduce additional output parameters. 
	Specifically, each output position
	will be annotated with an output type $\t$ from $R$ and an additional
	index in $\{1,\dots,k\}$. 
	The intended meaning of the index is as follows:
	if $(y,z)$ is the source/target origin pair associated with
        an output position labeled by $(\t,i)$, $i \in \set{1,\dots,k}$, then
	then there are exactly $(i-1)$ positions $y'<y$ such that 
	$(\hat{u},y',z)\models \move_\t$.
	
	Note that this indexing depends on the choice of the target origin $z$. 
	Therefore, different indexing are possible for different choice of 
	the target origin $z$.
      
	
	Based on the resynchronizer $R$, we define the new resynchronizer as
	$R'=(\bar I,\bar O',\ipar,\opar',(\move'_{(\otype,i)})_{\otype,i},
	(\nxt_{(\otype,i),(\otype',i')})_{\otype,i,\otype',i'})$,
	where
	\begin{itemize}
		\item $\bar O'=\bar O \uplus \set{O'_1,\dots,O'_k}$ consists of the old output 
		parameters $\bar O$ of $R$ plus some new parameters $O'_1,\dots,O'_k$ for
		representing indices in $\{1,\dots,k\}$;
		\item $\opar'$ defines language of all output annotations whose projections
		over $\G'$ (the output alphabet extended with the parameters of $R$) 
		satisfy $\opar$ and each position is marked by exactly one index;
		\item given a type $\otype'$ that encodes a type $\otype$ of $R$
		and an index $i\in\{1,\dots,k\}$, $\move'_{\otype'}(y,z)$
		states that $y$ is the $i$-th position $y'$ satisfying $\move_\otype(y',z)$;
		This property can be expressed by the MSO-formula
		\begin{align*}
		\exists ~y_1<\dots<y_i=y ~ \bigwedge\nolimits_j \move_\otype(y_j,z) \\
		~\wedge~ \forall y'\leq y ~ \big( \move_\otype(y',z) \rightarrow
		\bigvee\nolimits_j y'=y_j \big);
		\end{align*}
		\item $\nxt'_{(\otype,i),(\otype',i')}(z,z')$ enforces the same property 
		as $\nxt_{\otype,\otype'}(z,z')$.
	\end{itemize}
	
	The resynchronizer $R'$ is $1$-bounded by definition of $\move'_{(\otype,i)}$. 
	If for positions $y<y'$, $(\hat{u},y,z)\models \move'_{(\otype,i)}$ and 
	$(\hat{u},y',z)\models\move'_{(\otype,i)}$, then $y$ and $y'$ are
        both the $i$-th source position
	in $\hat{u}$ satisfying $\move_{\otype}$ with target $z$, which is a contradiction. 
	
	We now prove that $R$ and $R'$ define the same relation between synchronized pairs.
	First we show $R'\subseteq R$. Consider $((u,v),(u,v'))\in R'$. Therefore, there exists
	$\hat{u}\models \ipar$ and $\hat{v}\models\opar'$ such that $\move'$ applied to positions of 
	$\hat{v}$ give the $v'$ witnessing $((u,v),(u,v'))\in R'$.
	By definition of $\opar'$, $\hat{v}_{\G'}\models \opar$. 
	Suppose, a position $x$ of output type $(\t,i)$ is moved from origin $y$ in $v$ to 
	$z$ in $v'$. 
	This means $(\hat{u},y,z)\models \move'_{(\t,i)}$. Then, by definition of
	$\move'_{(\t,i)}$, $(\hat{u},y,z)\models \move_\t$.
	This shows $R'\subseteq R$.
	
	For the  containment $R\subseteq R'$, consider $((u,v),(u,v'))\in R$.
	Therefore, there exists $\hat{u}\models\ipar$ and $\hat{v}\models\opar$ such that $\move$
	applied to each position in $\hat{v}$  witnesses $((u,v),(u,v'))\in R$.
	This means for every position $x\in \dom(\hat{v})$ with output-type $\t$, there exist
	$y,z$, such that $(\hat{u},y,z)\models \move_{\t}$, 
	$y=\orig(v(x))$ and $z=\orig(v'(x))$. For such a position $x\in \dom(\hat{v})$ of output
	type $\t$, let $i\in \set{1,\dots,k}$ be such that there are exactly $i-1$ positions
	$y_1<y_2<\dots y_{i-1}<y$ such that $(\hat{u},y_j,z)\models\move_\t$.
	Let $\hat{v}'$ be the annotation of $\hat{v}$ where every
        position $x$ is annotated with 
	the index $i$ as above. Clearly $\hat{v}'\models \opar'$ and therefore, 
	$((u,v),(u,v'))\in R'$. We conclude $R=R'$.
\qed
\end{proof}

\ResyncClosure*

\begin{proof}
  	Let  
	$R=(\bar I,\bar O,\ipar,\opar,(\move_\otype)_\otype,(\nxt_{\otype,\otype'})_{\otype,\otype'})$
	and 
	$R'=(\bar I',\bar O',\ipar',\opar'$, $(\move'_\lambda)_\lambda,
	(\nxt'_{\lambda,\lambda'})_{\lambda,\lambda'})$ 
	be two bounded, regular resynchronizers. 
	In view of Lemma \ref{lem:one-bounded}, we can assume that 
	both resynchronizers are $1$-bounded. 
	The composition $R\circ R'$ can be defined by combining the effects of $R$ 
	and $R'$ almost component-wise.
	Some care should be taken, however, in combining the formulas $\nxt$ and $\nxt'$.
	Formally, we define the composed resynchronizer
	$R'' = (\bar I'',\bar O'',\ipar'',\opar'',
	(\move''_{(\otype,\lambda)})_{\otype,\lambda},
	(\nxt''_{(\otype,\lambda),(\otype',\lambda')})_{\otype,\lambda,\otype',\lambda'})$,
	where
	\begin{itemize}
		\item $\bar I''$ is the union of the parameters $\bar I$ and $\bar I'$,
		\item $\bar O''$ is the union of the parameters $\bar O$ and $\bar O'$,
		\item $\ipar''$ is the conjunction of the formulas $\ipar$ and $\ipar'$;
		\item $\opar''$ is the conjunction of the formulas $\opar$ and $\opar'$;
		\item $\move''_{(\otype,\lambda)}(y,z)$ states the existence of some position
		$t$ satisfying both formulas $\move_\otype(t,z)$ and
		$\move'_\lambda(y,t)$;
		\item $\nxt_{(\otype,\lambda),(\otype',\lambda')}(z,z')$ requires that
		$\nxt_{\otype,\otype'}(z,z')$ holds and, moreover, that there exist
		some positions $t,t'$ satisfying $\move_\otype(t,z)$,
		$\move_{\otype'}(t',z')$, and $\nxt_{\lambda,\lambda'}(t,t') $;
		note that these positions $t,t'$ are uniquely determined from $z,z'$
		since $R$ is $1$-bounded, and they act, at the same time,
		as source origins for $R$ and as target origins for $R'$.
	\end{itemize}
	
	By definition, $\move''_{(\t,\lambda)}$ is $1$-bounded, thus $z$ and $\t$ 
	determine a unique $t$, which together with $\lambda$ determines a unique $y$.
	It is also easy to see that $R''$ is equivalent to $R\circ R'$ as the positions
	corresponding to $t$ in formulas $\move''_{(\t,\lambda)}$ 
	and $\nxt''_{(\t,\lambda),(\t',\lambda')}$
	correspond to the source origin of $R$ and target origin of $R'$.
\qed
\end{proof}


	\section{Proofs from Section~\ref{ssec:flows}} 
\label{app:lem-nasty}

\OrderPreservation*

\begin{proof}
It is convenient to rephrase the claim of the lemma 
in terms of a juxtaposition operation on flows and 
in terms of an induced accessibility relation on edges. 
Formally, given two flows $F,G$, we define the juxtaposition $F G$ 
in a way similar to concatenation, with the only exception that 
in the result we maintain as an additional group of vertices the $\rgt$ vertices of $F$, 
glued with the state-matching $\lft$ vertices of $G$ 
(strictly speaking, the result of a juxtaposition of two flows is
not a flow, since it has three distinguished groups of vertices). 
We denote by $E\dots E$ the $n$-fold juxtaposition of the flow 
$E$ with itself (this must not be confused with the $n$-fold 
concatenation $E\cdot\ldots\cdot E$).

Let $F,E,G$, $f,e$, $\r_f,\r_e$ be as stated in the lemma, and let 
$\preceq$ denote the accessibility order between edges 
in a juxtaposition of flows, e.g.~$F E G$ (note that,
due to the type of flows considered here, $\preceq$ 
turns out to always be a total order on $F E G$).
Observe that the relative occurrence order of $\r_f$ and $\r_e$ 
inside $\r$ is the same as the accessibility order $\preceq$ of 
the edges $f$ and $e$ on the graph $F E G$. 
A similar claim holds for the occurrence order of $\r_f$ 
and any copy of $\r_e$ inside the pumped run $\pump^n_I(\r)$, which corresponds 
to the accessibility order of $f$ and any copy of $e$ in the graph $F E\dots E G$. 
Thanks to these correspondences, to prove the lemma it suffices 
to consider any copy $e'$ of $e$ in $F E\dots E G$, and show that 
\[
  f \preceq e \text{ in } F E G
  \qquad\text{iff}\qquad
  f \preceq e' \text{ in } F E\dots E G.
\]
We thus prove the above claim.
Consider the maximal path $\pi$ inside $E\dots E$ that contains the edge $e'$.
Note that this path starts and ends at some extremal vertices of $E\dots E$
(otherwise the path could be extended while remaining inside $E\dots E$).
Also recall that concatenation can be defined from juxtaposition by removing 
the intermediate groups of vertices, leaving only the extremal ones, and
by shortcutting paths into edges. We call this operation \emph{flattening}, 
for short.
In particular, since $E$ is idempotent, we have that $E = E\cdot\ldots\cdot E$
can be obtained from the flattening of $E\dots E$, and this operation transforms 
the path $\pi$ into an edge $e''$.
By construction, we have that 
$f \preceq e'$ in $F E\dots E G$
if and only if
$f \preceq e''$ in $F E G$.
So it remains to prove that 
\[
  f \preceq e \text{ in } F E G
  \qquad\text{iff}\qquad
  f \preceq e'' \text{ in } F E G.
\]
Clearly, this latter claim holds if
the edges $e$ and $e''$ coincide. This is indeed the case when $e$ is a straight
edge and $E$ is idempotent. The formal proof that this holds is rather tedious, 
but follows quite easily from a series of results we have already proven
in \cite{bgmp18}. Roughly speaking, one proves that:
\begin{itemize}
  \item the edges of an idempotent flow $E$ can be grouped into \emph{components} 
        (cf.~Definition 6.4 from \cite{bgmp18}),
        so that each component contains exactly one straight edge
        (cf.~Lemma 6.6 from \cite{bgmp18}, see also 
        the figure at page \pageref{fig:pumping},
         where components are represented by colors);
  \item every path inside the juxtaposition $E E$, with $E$ idempotent,
        consists of edges from the same component, say $C$;
        moreover, after the flattening from $E E$ to $E$, this path
        becomes an edge of $E$ that belongs again to the 
        component $C$
        (cf. Claims 7.3 and 7.4 in the proof of 
         Theorem 7.2 from \cite{bgmp18});
  \item every maximal path in $E\dots E$ that contains a straight edge
        starts and end at opposite sides of $E\dots E$
        (simple observation based on the definition of concatenation
         and Lemma 6.6 from \cite{bgmp18}).
\end{itemize}
To conclude, recall that $\pi$ is a path inside $E\dots E$ that
contains a copy $e'$ of the straight edge $e$, and that becomes 
the edge $e''$ after the flattening into $E$.
The previous properties immediately imply that $e=e''$.
\end{proof}


	\section{Proofs from Section~\ref{ssec:big}} \label{appendix:B}

As explained in the body, the technical lemmas involving 
output blocks are only applicable to transducers that avoid inversions.
\emph{Hereafter we assume that $T$ is a transducer
that avoids inversions, and we denote by $\r$ an arbitrary 
successful run of $T$.}

\Order*

\begin{proof}
$B_1$ and $B_2$ are clearly disjoint.
By way of contradiction, assume that $B_1$ and $B_2$ are $\obound$-large, 
but $B_1>B_2$. 
By Lemma~\ref{lem:bound}, we can find for both $i=1$ and $i=2$
a loop $J_i \subseteq I_i$ 
and a productive straight edge
$e_i\in\flw{J_i}$ 
that is witnessed by a subrun
intersecting $B_i$. 
Clearly, we have $J_1<J_2$, and since $B_1>B_2$, the subrun witnessing $e_1$ 
follows the subrun witnessing $e_2$.
Thus, $(J_1,e_1,J_2,e_2)$ is an inversion of $\r$, which contradicts 
the assumption that $T$ avoids inversions.
\end{proof}

\medskip
We now turn to the proofs of Lemmas \ref{lem:bin} and \ref{lem:idem},
which both require auxiliary lemmas relying again on the assumption
that $T$ avoids inversion.

\begin{restatable}{lemma}{Between}\label{lem:between}
Let $I$ be an input interval, $B_1<B_2$ two output blocks of $I$,
and $S$ the set of output positions strictly between $B_1$ and $B_2$
and with origins outside $I$.
If $B_1,B_2$ are $\obound$-large, then $S$ is $2\obound$-small.
\end{restatable}

\begin{proof}
By way of contradiction, suppose that  $S$ 
is  $2\obound$-large.
This means that 
$|\orig(S) \cap I'| > \obound$ for some interval 
$I'$ disjoint from $I$, say $I'<I$ 
(the case of $I'>I$ is treated similarly).
By Lemma \ref{lem:bound}, we can find two loops 
$J\subseteq I$ and $J'\subseteq I'$ 
and some productive straight edges 
$e\in\flw{J}$ and $e'\in\flw{J'}$ that are witnessed 
by subruns intersecting $B_1$ and $S$, respectively.
Since $S>B_1$, we know that the subrun witnessing
$e$ follows the subrun witnessing $e'$.
As in the previous proof, this shows the inversion
$(J,e,J',e')$, which contradicts 
the assumption that $T$ avoids inversions.
\end{proof}

\Bin*
      
\begin{proof}
By Lemma \ref{lem:order}, we have $B_1 < B_2$.
Moreover, all $\obound$-large output blocks of $I_1$ or $I_2$ 
are also $\obound$-large output blocks of $I$, so $B$ contains both $B_1$ and $B_2$.
Let $I_0$ be the maximal interval to the left of $I_1$, and thus adjacent to it,
and, similarly, let $I_3$ be the maximal interval to the right of $I_2$, and 
thus adjacent to it.

Suppose, by way of contradiction, that $B\setminus(B_1\cup B_2)$ is $4K\obound$-large.
This means that there is a $2K\obound$-large set $S\subseteq B\setminus(B_1\cup B_2)$ 
with origins entirely inside $I_0\juxt I_1$ or entirely inside $I_2\juxt I_3$. 
Suppose, w.l.o.g., that the former case holds, and decompose $S$ as a union of 
maximal output blocks $B'_1,B'_2,\dots,B'_n$ of either $I_0$ or $I_1$.
Since $S\cap B_1 = \emptyset$, we have that every block $B'_i$ 
with origins inside $I_1$ is $\obound$-small.
Similarly, by Lemma \ref{lem:between}, every block $B'_i$ 
with origins inside $I_0$ is $\obound$-small too.
Moreover, since $\r$ is $K$-visiting, we have that the number
$n$ of maximal output blocks of either $I_0$ or $I_1$ that 
are contained in $S$ is at most $2K$.
All together, this contradicts the assumption that $S$ is $2K\obound$-large.
\end{proof}

\medskip
  
\begin{restatable}{lemma}{Straight}\label{lem:straight}
Let $I$ be a loop of $\r$. Then $\flw{I}$ has at most one 
productive straight edge, and this edge must be $\lft\rgt$. 
\end{restatable}

%

\begin{proof}
 Suppose, by way of contradiction, that there are two productive
 straight edges in $\flw{I}$, say $e$ and $f$, with $e$ before $f$ in
 $\r$ (the reader may refer again to 
 the figure at page \pageref{fig:pumping},
 and think of $e$ and $f$, for instance, as the edges labeled by 
 $\a_2$ and $\g_1$, respectively). 
 Suppose that we pump $I$ twice, and let
 $I_1<I_2$ be the copies of $I$ in the pumped run $\r'$. Let also
 $e_1,e_2$ (resp.~$f_1,f_2$) be the corresponding copies of $e$ 
 (resp.~$f$), so that $e_j,f_j$ belong to the flow $\flw[\r']{I_j}$. It is
 easy to check the following properties:
 \begin{itemize}
 \item if $e$ is an $\lft\rgt$ edge, then the subrun witnessed by
   $e_1$ occurs in $\r'$ before the subrun witnessed by $e_2$ 
   (and the other way around if $e$ is $\rgt\lft$);
   \item the subruns witnessed by $e_1$ and $e_2$ occur in $\r'$ before the
     subruns witnessed by $f_1,f_2$ (this property follows easily from
     the observation that when building the product $\flw{I} \cdot
     \flw{I}$, the edges $e_1,e_2$ will be ``part'' of the edge $e$ in
     the product, whereas $f_1,f_2$ will be ``part'' of the edge
     $f$).
   \end{itemize}
   Let us assume first that $e$ is an $\rgt\lft$ edge. Then observe
   that $(I_1,e_1,I_2,e_2)$ is an inversion in $\r'$. But this
   contradicts $T$ being inversion-free. Therefore, both $e,f$ are
   $\lft\rgt$ edges. But then, $(I_1,f_1,I_2,e_2)$ is an inversion in
   $\r'$, and we have again a contradiction.
\end{proof}

\begin{remark}\label{rem:pump}
  The statement of Lemma~\ref{lem:straight} can be strengthened by
  observing the following property of productive edges in an
  idempotent flow. Assume that $I$ is a loop and $e$ is
  the unique productive straight edge in $\flw{I}$. 
  Let $f$ be some productive (non-straight) edge 
   of $\flw{I}$ with $f \not= e$. When $I$ is pumped then the subruns
   witnessing the copies of $f$ are part of the subrun witnessing
   $e$ in the product flow. This means for example, that in 
   the figure on page~\pageref{l:pumping} the
   productive edges are either all among the blue edges, or all among
   the gray edges (none of the red edges can be productive, because the
   straight edge is $\rgt\lft$, and would result in a productive $\rgt \lft$ edge 
   on pumping).
\end{remark}

\Idem*

\begin{proof}
By Lemma~\ref{lem:straight}, we can assume that $\flw{I}=\flw{I_k}$ has a unique
productive straight edge $e$, which is an $\lft\rgt$ edge.   Let $B'_k$ be the
output block corresponding to $e$ in $\flw{I_k}$. Since $\flw{I}$
is idempotent, 
  any output block of $I$ has one of the following shapes (see also
  Remark~\ref{rem:pump}):
\begin{itemize}
\item[(a)] A block $B =B'_1 \juxt J'_1 \juxt \dots J'_{n-1} \juxt B'_n$, for 
  some intervals $J'_1,\dots,J'_{n-1}$ such that
  $\out{I_k}$ is included in $J'_{k-1}\juxt B'_k\juxt J'_k$ for all $1< k <n$,
\item[(b)] At most $2K$ output blocks $L_1,\dots,L_p,R_1,\dots,R_s$, where
  each $L_i$ and $R_j$ corresponds to an edge of $\flw{I_1}$ and
  $\flw{I_n}$, respectively: the blocks $L_i, R_j$ appear before,
  respectively after the straight edge. 
\end{itemize}
Moreover, the order of the output blocks of $I$ is
$L_1,\dots,L_p,B,R_1,\dots,R_s$.  To illustrate the statement (a) above,
the reader can take as example $p=s=2$,
  $L_1=\a$, $L_2=\beta$, $R_1=\kappa$, $R_2=\zeta$, 
    $J'_1=\cdots = J'_{n-1}=\a\kappa \beta \zeta$ in Figure~\ref{fig:idempotent-loop-appendix}. 
 For statement (b), notice that in $I_1 \cdot I_2 \cdot I_3$, we have 
 the output blocks $L_1=\alpha, L_2=\beta$ of $I_1$, the straight edge 
    $(\gamma \alpha \kappa \beta \zeta)^2 \gamma$ (the purple zigzag) followed by $R_1=\kappa, R_2=\zeta$ 
     of $I_3$.  
    
Note that $B_k=\bout{I_k}$ is contained in $J'_{k-1}\juxt B'_k \juxt J'_k$ 
for all $1< k <n$. Moreover, $B_1=\bout{I_1}$ is contained in
$L_1  \cdots  L_p \juxt B'_1\juxt  J'_1$, and $B_n=\bout{I_n}$ 
is contained in
$J'_{n-1} \juxt B'_n \juxt R_1 \cdots  R_s$. 
Also by Lemma \ref{lem:order},
$B_j$ precedes $B_{j+1}$ for all $j$. 
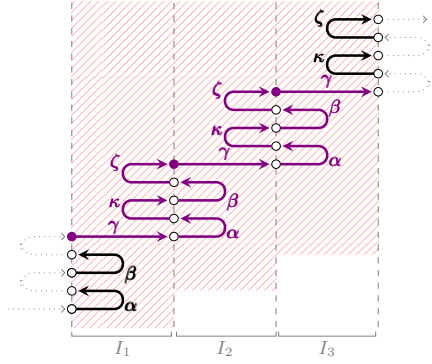
\begin{figure}[!t]
\centering
\scalebox{0.8}{%
\begin{tikzpicture}[baseline=0, inner sep=0, outer sep=0, minimum size=0pt]
  \tikzstyle{dot} = [draw, circle, fill=white, minimum size=4pt]
  \tikzstyle{fulldot} = [draw, circle, fill=black, minimum size=4pt]
  \tikzstyle{grayfactor} = [->, shorten >=1pt, rounded corners=6, gray, thin, dotted]
  \tikzstyle{factor} = [->, shorten >=1pt, rounded corners=6]
  \tikzstyle{dotfactor} = [->, shorten >=1pt, dotted, rounded corners=6]
  \tikzstyle{fullfactor} = [->, >=stealth, shorten >=1pt, very thick, rounded corners=6]
  \tikzstyle{dotfullfactor} = [->, >=stealth, shorten >=1pt, dotted, very thick, rounded corners=6]

\begin{scope}[xshift=5cm, yshift=-0.3cm, xscale=0.42, yscale=0.3]
  \tikzstyle{factor} = [->, shorten >=1pt, dotted, rounded corners=4]
  \tikzstyle{fullfactor} = [->, >=stealth, shorten >=1pt, very thick, rounded corners=4]

  \fill [pattern=north east lines, pattern color=red!25]
        (4,-1) rectangle (8,17);
  \fill [pattern=north east lines, pattern color=red!25]
        (8,1) rectangle (12,17);
  \fill [pattern=north east lines, pattern color=red!25]
        (12,3) rectangle (16,17);
  \draw [dashed, thin, gray] (4,-1) -- (4,17);
  \draw [dashed, thin, gray] (8,-1) -- (8,17);
  \draw [dashed, thin, gray] (12,-1) -- (12,17);
  \draw [dashed, thin, gray] (16,-1) -- (16,17);
  \draw [gray] (4,-1.25) -- (4,-1.5) -- (7.9,-1.5) -- (7.9,-1.25);
  \draw [gray] (8.1,-1.25) -- (8.1,-1.5) -- (12,-1.5) -- (12,-1.25);
  \draw [gray] (12.1,-1.25) -- (12.1,-1.5) -- (15.9,-1.5) -- (15.9,-1.25);
  \draw [gray] (6,-1.75) node [below] {\footnotesize $I_1$};
  \draw [gray] (10,-1.75) node [below] {\footnotesize $I_2$};
  \draw [gray] (14,-1.75) node [below] {\footnotesize $I_3$};

 \draw (1.5,0) node (node0) {};
  \draw (4,0) node [dot, draw=black] (node1) {};
  
 \draw (1.5,0) node (node0) {};
  \draw (4,0) node [dot, draw=black] (node1) {};

  \draw (6,0) node (node2) {};
  \draw (6,1) node (node3) {};
  \draw (4,1) node [dot, draw=black] (node4) {};
  \draw (2,1) node (node5) {};
  \draw (2,2) node (node6) {};

 \draw (1.5,2) node (node01) {};
  \draw (4,2) node [dot, draw=black] (node111) {};
  
 \draw (1.5,2) node (node01) {};
  \draw (4,2) node [dot] (node111) {};

  \draw (6,2) node (node211) {};
  \draw (6,3) node (node311) {};
  \draw (4,3) node [dot, draw=black] (node411) {};
  \draw (2,3) node (node511) {};
  \draw (2,4) node (node611) {};

  \draw (4,4) node [fulldot, violet] (node7) {};
  \draw (8,4) node [dot, draw=black] (node8) {};

  \draw (10,4) node (node9) {};
  \draw (10,5) node (node10) {};
  \draw (8,5) node [dot, draw=black] (node11) {};
  \draw (6,5) node  (node12) {};
  \draw (6,6) node (node13) {};
  \draw (8,6) node [dot] (node1111) {};
  \draw (8,7) node [dot, draw=black] (node1411) {};
  \draw (6,7) node  (node1511) {};
  \draw (6,8) node (node1611) {};
   \draw (8,8) node [fulldot,violet] (node14) {};

  \draw (10,5) node (node911) {};
  \draw (10,6) node (node1011) {};
  \draw (10,6) node  (node1211) {};
  \draw (10,7) node (node1311) {};

  \draw (12,8) node [dot] (node15) {};
  \draw (14,8) node  (node16) {};
  \draw (14,9) node  (node17) {};
  \draw (12,9) node [dot, draw=black] (node18) {};
  \draw (10,9) node (node19) {};
  \draw (10,10) node  (node20) {};
  \draw (12,10) node [dot] (node2111) {};
  \draw (14,10) node (node1911) {};
  \draw (14,11) node  (node2011) {};

 \draw (12,11) node [dot, draw=black] (node3111) {};
 
 \draw (10,11) node  (node3112) {};
 
 \draw (10,12) node  (node3113) {};

  \draw (12,12) node [fulldot, violet] (node21) {};
  \draw (16,12) node [dot] (node22) {};
  \draw (18,12) node (node23) {};
  \draw (18,13) node  (node24) {};
  \draw (16,13) node [dot, draw=black] (node25) {};
  \draw (14,13) node  (node26) {};
  \draw (14,14) node (node27) {};
  \draw (16,14) node [dot, draw=black] (node28) {};

    \draw (18,14) node  (node2900) {};
  \draw (18,15) node (node2911) {};

  \draw (16,15) node [dot, draw=black] (node30) {};
  \draw (16,16) node [dot, draw=black] (node31) {};
  
  \draw (14,15) node (node3000) {};
  \draw (14,16) node  (node3100) {};
  
  \draw (18,16) node (node3133) {};
    \draw [grayfactor] (node31) -- (node3133);

  \draw [grayfactor] (node0) -- (node1);
  \draw [fullfactor] (node1) -- (node2.center) -- node [below right=0.5mm] {\footnotesize $\pmb{\alpha}$}
                           (node3.center) -- (node4); 
                           
  \draw [fullfactor] (node111) -- (node211.center) -- node [below right=0.5mm] {\footnotesize $\pmb{\beta}$}
                           (node311.center) -- (node411);

  \draw [grayfactor] (node4) -- (node5.center) -- (node6.center) -- (node111);

  \draw [grayfactor] (node411) -- (node511.center) -- (node611.center) -- (node7);
  \draw [fullfactor, violet] (node7) -- node [above left=.7mm] {\footnotesize $\pmb{\gamma}$} (node8);

  \draw [fullfactor, violet] (node8) -- (node9.center) -- node [below right=0.5mm] {\footnotesize $\pmb{\alpha}$}
                           (node10.center) -- (node11);
  \draw [fullfactor,violet] (node11) -- (node12.center) -- node [above left=0.5mm] {\footnotesize $\pmb{\kappa}$}
                           (node13.center) -- (node1111);
                           
                             \draw [fullfactor,violet] (node1111) -- (node1211.center) -- node [below right=0.5mm] {\footnotesize $\pmb{\beta}$}
                           (node1311.center) -- (node1411);
                           
                           \draw [fullfactor,violet] (node1411) -- (node1511.center) -- node [above left=0.5mm] {\footnotesize $\pmb{\zeta}$}
                           (node1611.center) -- (node14);

  \draw [fullfactor,violet] (node14) -- node [above=.7mm] {\footnotesize $\pmb{\gamma}$} (node15);

  \draw [fullfactor,violet] (node15) -- (node16.center) -- node [below right=0.5mm] {\footnotesize $\pmb{\alpha}$}
                           (node17.center) -- (node18);
  \draw [fullfactor,violet] (node18) -- (node19.center) -- node [above left=0.5mm] {\footnotesize $\pmb{\kappa}$}
                           (node20.center) -- (node2111);
                           
                            \draw [fullfactor,violet] (node2111) -- (node1911.center) -- node [above right=0.5mm]
                             {\footnotesize $\pmb{\beta}$}
                           (node2011.center) -- (node3111);

\draw [fullfactor,violet] (node3111) -- (node3112.center) -- node [above left=0.5mm] {\footnotesize $\pmb{\zeta}$}
                           (node3113.center) -- (node21);
                           
                          \draw [fullfactor] (node30) -- (node3000.center) -- node [above left=0.5mm] {\footnotesize $\pmb{\zeta}$}
                           (node3100.center) -- (node31);

  \draw [fullfactor,violet] (node21) -- node [above=.7mm] {\footnotesize $\pmb{\gamma}$} (node22);
  \draw [grayfactor] (node22) -- (node23.center) -- (node24.center) -- (node25); 
  \draw [fullfactor, black] (node25) -- (node26.center) -- node [above left=0.5mm] {\footnotesize $\pmb{\kappa}$}
                           (node27.center) -- (node28);
\draw [grayfactor] (node28) -- (node2900.center) -- (node2911.center) -- (node30);

\end{scope}
\end{tikzpicture}
}
\caption{Illustration for Lemma \ref{lem:idem}. }\label{fig:idempotent-loop-appendix}
\end{figure}

If one of the $L_k$ is $\obound$-large, then $B_1$ is non-empty, hence
$\bout{I}$ is non-empty and starts at the first position of
$B_1$. Similarly, if one of the $R_k$ is $\obound$-large then $B_n$ is
non-empty, hence $\bout{I}$ is non-empty and ends with the last
position of $B_n$. Otherwise, if all $L_j,R_j$ are $\obound$-small
then $\bout{I}$ is either empty or equal to $B$. In all cases we can write
$\bout{I}=B_1 \juxt J_1 \juxt B_2 \juxt J_2 \juxt \dots \juxt J_{n-1}
\juxt B_n$, with each $J_k$ consisting of at most $K$ $\obound$-small
blocks of $I_k$ and $K$ $\obound$-small blocks of
$I_{k+1}$, namely those left over after gathering the $\obound$-large
blocks into $\bout{I_k}$ and $\bout{I_{k+1}}$, respectively. Therefore, each $J_k$ is $2K\obound$-small.
\end{proof}

\section{Proof of Theorem~\ref{thm:oneway-def}.} \label{ap:main}

Recall that the implication $4\rightarrow 1$ is straightforward,
and the implication $1\rightarrow2$ was already proven in full detail
in the main body. Below, we provide detailed proofs of the implications
$2\rightarrow 3\rightarrow 4$.

\medskip
The implication $2\rightarrow3$ is shown by contradiction. 
Consider a successful run $\r$ of $T$ on some input $u$
and suppose there is an inversion: $\r$ has disjoint loops $I<I'$, 
whose flows contain productive straight edges, say $e$ in $\flow_\r(I)$ 
and $e'$ in $\flow_\r(I')$, such that $e'$ precedes $e$ in the run order. 
Let $u=u_1\, w\, u_2 \, w' \, u_3$ so that $w$ and $w'$ are the
factors of the input delimited by the loops $I$ and $I'$, respectively.
Further let $v$ and $v'$ be the outputs produced along the edges 
$e$ and $e'$, respectively. 
Consider now the run $\r_k$ obtained from $\r$ by pumping 
the input an arbitrary number $k$ of times on the loops $I$ and $I'$. 
This run is over the input $u_1 \, (w)^{k} \, u_2 \, (w')^{k} \, u_3$, 
and in the output produced by $\r_k$ there are $k$ (possibly non-consecutive) 
occurrences of $v$ and $v'$.
By Lemma~\ref{lem:order-preservation} all occurrences of $v'$ precede all
occurrences of $v$. In particular, if $X_1$ (resp. $X_2$) is the set of 
positions corresponding to all the occurrences of $v$ (resp. $v'$) 
in the output produced by $\r_k$, then $(X_1,X_2)$ is a cross of 
width at least $k$.

\medskip
Now we prove the implication $3\rightarrow 4$.
We assume that no run of $T$ has any inversion. 
We want to build a 
partially bijective, regular resynchronizer 
$\Rr$ that is $T$-preserving and such that $\Rr(T)$ is order-preserving.
The resynchronizer $\Rr$ will use input and output parameters to
guess a successful run $\r$ of $T$ on the input $u$ and a corresponding 
\emph{factorization tree} for $\r$ of height at most $H=3|M_T|$ 
(see page~\ref{p:factorization} for the formal definition and
the existence of a factorization tree).

The resynchronizer $\Rr$ that we will define is \emph{functional},
which means here that every source origin is mapped by each
$\move_\t$ formula to at most one target position.


\paragraph*{Notations.}

For a node $\nd$ of a factorization tree we write $I(\nd)$ for 
the input interval which is the yield of the subtree of $\nd$. 
Recall that the leaves of the factorization tree correspond to 
singleton intervals on the input. 
The set of output positions with origins in $I(\nd)$ is denoted by $\out{\nd}$
(note that this might not be an interval).

Recall that an output block $B$ of $\out{\nd}$ 
is a maximal interval of output positions with origins in $I(\nd)$,
and hence the position just before and the position 
just after $B$ have origins outside $I(\nd)$.
We also write $\bout{\nd}$, instead of $\bout{I(\nd)}$, for
the dominant output interval of $I(\nd)$ 
(see page~\pageref{p:bigout} for the definition).
Finally, given a position $x$ in the output 
and a level $\ell$ of the factorization tree of $\r$, 
we denote by $\nd_{x,\ell}$ the unique
node at level $\ell$ such that $I(\nd_{x,\ell})$ 
contains the source origin of $x$. 


\paragraph*{Input Parameters.}
The successful run $\r$ together with its factorization tree of height at most $H=3|M_T|$
can be easily encoded over the input using input parameters $\ipar$.
The parameters describe each input interval $I(\nd)$ and the label 
$\flow(I(\nd))$ of each node $\nd$ in the factorization tree.
Formally, an input interval $I(\nd)$ is described by marking the 
begin and end with two distinguished parameters for the specific level. 
The label $\flow(I(\nd))$ annotates every position inside $I(\nd)$.
This accounts for $H(2+|M_T|)$ input parameters. 
Correctness of the annotations with the above input parameters 
can be expressed by a formula $\ipar$.
In particular, on the leaves, $\ipar$ checks that every interval 
is a singleton of the form $\{y\}$ and its flow is the one induced 
by the letter $u(y)$.
On the internal nodes, $\ipar$ checks that the label of a node
coincides with the monoid product of the labels of its children, which
is a composition of flows. It also checks that for every node with
more than two children, the node and the children are labelled by 
the same idempotent flow.

\paragraph*{Output Parameters.}
We also need to encode the run $\r$ on the output,
because the resynchronizer will determine the target origin
of an output position, not only on the basis of the flow at 
the source origin, but also on the basis of the productive 
transition that generated that particular position.
The annotation that encodes the run $\r$ on the output 
is done using output parameters (one for each transition in $\Delta$),
and its correctness will be enforced by a suitable combination of the 
formulas $\opar$, $\move_\otype$, and $\nxt_{\otype,\otype'}$.
This will take a significant amount of technical details and
will rely on specific properties of formulas $\move_\otype$,
so we prefer to temporarily postpone those details. 

Below, we explain how the origins are transformed by a series
of partial resynchronizers $\Rr_\ell$ that ``converge'' in finitely
many steps to a desired resynchronization, under the assumption
that the output annotation correctly encodes the same run $\r$ that is
represented in the input annotation.

\paragraph*{Moving origins.}

Here we will work with a fixed successful run $\r$ and a factorization tree for it,
that we assume are correctly encoded by the input and output annotations.
For every level $\ell$ of the factorization tree, we will define a
functional,  
bounded, regular resynchronizer $\Rr_\ell$. Each resynchronizer $\Rr_\ell$
will be \emph{partial}, in the sense that for some output positions it will 
not define source-target origin pairs. However, the set of output positions 
with associated source-target origin pairs increases with the level $\ell$,
and the top level resynchronizer $\Rr_*$ will specify source-target 
origin pairs for all output positions. The latter resynchronizer will 
almost define the resynchronization that is needed to prove item (4) of the theorem;
we will only need to modify it slightly in order to make it $1$-bounded
and to check that the output annotation is correct.

To enable the inductive construction, we need the resynchronizer $\Rr_\ell$ 
to satisfy the following properties, for every level $\ell$ of the 
factorization tree:
\begin{itemize}
  \item the set of output positions for which $\Rr_\ell$ defines target origins 
        is the union of the dominant output intervals $\bout{\nd}$ of 
        all nodes $\nd$ at level $\ell$;
  \item $\Rr_\ell$ only moves origins within the same interval at level $\ell$,
        that is, $\Rr_\ell$ defines only pairs $(y,z)$ of source-target origins 
        such that $y,z\in I(\nd)$ for some node $\nd$ at level $\ell$;
  \item the target origins defined by $\Rr_\ell$ are order-preserving
        within the same interval at level $\ell$, that is, 
        for all output positions $x<x'$, if $\Rr_\ell$ defines 
        the target origins of $x,x'$ to be $z,z'$, 
        respectively, and if $z,z'\in I(\nd)$ for some node $\nd$ at level $\ell$,
        then $z\le z'$.
  \item $\Rr_\ell$ is $\ell\cdot4K\obound$-bounded, namely, there are at most
        $\ell\cdot4K\obound$ distinct source origins that are moved by $\Rr_\ell$
        to the same target origin.
\end{itemize}
The inductive construction of $\Rr_\ell$ will basically amount to defining 
appropriate formulas $\move_\otype(y,z)$. 

\smallskip
\noindent{\bf {Base Case}}. 
The base case is $\ell=0$, namely, when the resynchronization is acting
at the leaves of the factorization tree. 
In this case, the regular resynchronizer $\Rr_\ell$ is vacuous, as the 
input intervals $I(\nd)$ associated with the leaves $\nd$ are singletons, 
and hence all dominant output intervals $\bout{\nd}$ are empty.
Formally, for this resynchronizer $\Rr_\ell$, we simply let $\move_\otype(y,z)$ 
be false, independently of the underlying output type $\otype$ and of the 
source and target origins.
This resynchronization is clearly functional,
$0$-bounded,
and order-preserving.

\smallskip
\noindent{\bf {Inductive Step}}. 
For the inductive step, we explain how the origins of an 
output position $x\in\bout{\nd}$ are moved within the interval $I(\nd)$,
where $\nd=\nd_{x,\ell}$ is the node at level $\ell$ that ``generates'' $x$. 
Even though we explain this by mentioning the node $\nd_{x,\ell}$, 
the definition of the resynchronization will not depend
on it, but only on the level $\ell$ and the underlying input 
and output parameters.
In particular, to describe how the origin of a $\otype$-labeled 
output position $x$ is moved, the formula $\move_\otype(y,z)$ 
has to determine the productive edge that generated $x$ 
in the flow that labels the node $\nd_{x,\ell}$.
This can be done by first determining from the output type $\otype$
the productive transition $t_x$ that generated $x$, and then
inspecting the annotation at the source origin $y$ to
``track'' $t_x$ inside the productive edges of the flow 
$\flow(I_{\nd'})$, for each node $\nd'$ along the unique 
path from the leaf $\nd_{x,0}$ to node $\nd_{x,\ell}$.
In the case distinction below, we implicitly rely on this 
type of computation, which can be easily implemented in MSO.
\begin{enumerate}
	\item {\bf{$\pmb{\nd_{x,\ell}}$ is a binary node}}. 
	We first consider the case where $\nd=\nd_{x,\ell}$ is a binary node
	(the annotation on the source origin $y$ will tell us whether this is the case).
	Let $\nd_1,\nd_2$ be the left and right children of $\nd$.
	If $x$ belongs to one of the dominant output blocks $\bout{\nd_1}$ 
	and $\bout{\nd_2}$ (again, this information is available at
	the input annotation), then the resynchronizer $\Rr_\ell$ 
	will inherit the source-target origin pairs associated 
	with $x$ from the lower level resynchronization $\Rr_{\ell-1}$.
	Note that $\bout{\nd_1} < \bout{\nd_2}$ by
	Lemma~\ref{lem:order}, so $\Rr_\ell$ is order-preserving at least 
	for the output positions inside $\bout{\nd_1} \cup \bout{\nd_2}$.
	
	We now describe the source-target origin pairs when 
	$x\in\bout{\nd} \setminus (\bout{\nd_1} \cup \bout{\nd_2})$.
	The idea is to move the origin of $x$
	to one of the following three input positions, depending on the 
	relative order between $x$ and the positions in $\bout{\nd_1}$ and in $\bout{\nd_2}$: 
	\begin{itemize}
		\item the first position of $I(\nd_1)$, if $x < \bout{\nd_1}$;
		\item the last position of $I(\nd_1)$, if $\bout{\nd_1} < x < \bout{\nd_2}$;
		\item the last position of $I(\nd_2)$, if $x > \bout{\nd_2}$.
	\end{itemize}
	Which of the above cases holds can be determined, again, by inspecting
	the output type $\otype$ and the annotation of the source origin $y$,
	in a way similar to the computation of the productive edge that generated
	$x$ at level $\ell$.
	So the described resynchronization can be implemented by an MSO formula
	$\move_\otype(y,z)$.
	
	The resulting resynchronization $\Rr_\ell$ is functional and 
	order-preserving inside every interval at level $\ell$.
 	It remains to argue that $\Rr_\ell$ is $\ell\cdot4K\obound$-bounded. 
	To see why this holds, assume, by the inductive hypothesis, that 
	$\Rr_{\ell-1}$ is $(\ell-1)\cdot4K\obound$-bounded. 
	Recall that the new source-target origin pairs that
	are added to $\Rr_\ell$ are those associated with output positions 
	in $\bout{\nd} \setminus (\bout{\nd_1}\cup\bout{\nd_2})$. 
	Lemma \ref{lem:bin} tells us that there are at most 
	$4K\obound$ distinct positions that are source origins of 
	such positions.
	So, in the worst case, at most $(\ell-1)\cdot4K\obound$ 
	source origins from $\Rr_{\ell-1}$ and
	at most $4K\obound$ new source origins from $\Rr_\ell$
	are moved to the same target origin. 
	This shows that $\Rr_\ell$ is $\ell\cdot4K\obound$-bounded.

	\item {\bf{$\pmb{\nd_{x,\ell}}$ is an idempotent node}}. 
	The case where $\nd=\nd_{x,\ell}$ is an idempotent node with children 
	$\nd_1,\nd_2,\dots,\nd_n$ follows a similar approach.
	For brevity, let $I_i = I(\nd_i)$ and $B_i = \bout{\nd_i}$. 
	By Lemma \ref{lem:order}, we have $B_1 < B_2 < \dots < B_n$.
	Lemma~\ref{lem:idem} then provides a decomposition of $\bout{\nd}$ as 
	$B_1 \juxt J_1 \juxt B_2 \juxt J_2 \juxt \dots \juxt J_{n-1} \juxt B_n$, 
	for some $2K\obound$-small output intervals $J_k$, for $k=1,\dots,n-1$, 
	that have origins inside $I_k\cup I_{k+1}$.

	As before, the resynchronizer $\Rr_\ell$ behaves exactly as $\Rr_{\ell-1}$
	for the output positions inside the $B_k$'s. 
	For any other output position, 
	say $x\in J_k$ for some $k=1,2,\dots,n-1$, 
	we first recall that the source origin $y$ of $x$ 
	is either inside $I_k$ or inside $I_{k+1}$. 
	Depending on which of the two intervals contains $y$, 
	the resynchronizer $\Rr_\ell$ will define the target origin $z$ to 
	be either the last position of $I_k$ or the first position of $I_{k+1}$.
	However, since we cannot determine using MSO the index $k$ of the interval 
	$J_k$ that contains $x$, we proceed as follows.

    First observe that any block $B_i$ can be identified 
    by some flow edge at level $\ell-1$, and the latter edge can
    represented in MSO by suitable monadic predicates over the input.
    Let $B,B'$ be the two consecutive blocks among $B_1,\dots,B_n$ 
    such that $B<x<B'$. Note that these blocks can be determined in MSO
    once the productive edge that generated $x$ is identified.
	Further let $I$ be the interval among $I_1,\dots,I_n$ that 
	contains the origin $y$ of $x$. 
	By the previous arguments, we have that the interval $I$ 
	contains either all the origins of $B$ or all the origins $B'$.
	Moreover, which of the two sub-cases holds can again be determined 
	in MSO by inspecting the annotations. 
	The formula $\move_\otype(y,z)$ can thus define the target origin $z$ to be
	\begin{itemize}
	  \item the last position of $I$, if $I$ contains the origins of $B$;
	  \item the first position of $I$, if $I$ contains the origins of $B'$.
    \end{itemize}

    The above construction yields  a functional
    regular resynchronization $\Rr_\ell$
	that associates with any two output positions $x<x'$ with source origins in the same
	interval $I(\nd)$, some target origins $z\le z'$. In other words, the resynchronization 
	$\Rr_\ell$ is order-preserving in each interval at level $\ell$.
	
	It remains to show that $\Rr_\ell$ is $\ell\cdot4K\obound$-bounded, 
	under the inductive hypothesis that $\Rr_{\ell-1}$ is $(\ell-1)\cdot 4K\obound$-bounded. 
	This is done using a similar argument as before, 
	that is, by observing that the output positions in 
	$\bout{\nd}\setminus\big(\bigcup_{1\le k\le n}\bout{\nd_i}\big)$ 
	belong to some $J_k$, and in the worst case all source origins $y$ 
	of positions from $J_k$ are moved to the last position of $I_k$.
	By Lemma \ref{lem:idem}, there are at most $2K\obound$ such positions $y$.
\end{enumerate}

\smallskip
\paragraph*{Top level resynchronizer.}

Let $\Rr_*$ be the the resynchronizer $\Rr_\ell$ obtained at the top level $\ell$ of the factorization tree.
Based on the above constructions, $\Rr_*$ defines target origins for all output positions,
unless the dominant output interval $\bout{\nd}$ associated with the root $\nd$ is empty
(this can indeed happen when the number of different origins in the output is at most $\obound$,
so not sufficient for having at least one $\obound$-large output factor). 
In particular, if $\bout{\nd}\neq\emptyset$, then $\bout{\nd}$ is the whole output, 
and $\Rr_\ell$ is basically the desired resynchronization, assuming that the output 
annotations are correct.

Let us now discuss briefly the degenerate case where $\bout{\nd}=\emptyset$,
which of course can be detected in MSO.
In this case, the appropriate resynchronizer $\Rr_*$ should be 
redefined so that it moves all source origins to the same target origin, 
say the first input position. 
Clearly, this gives a functional,
regular resynchronizer 
that is order-preserving and $\obound$-bounded.

\paragraph*{Correctness of output annotation.}
Recall that the properties of the top level resynchronizer $\Rr_*$,
in particular, the claim that $\Rr_*$ is bounded,
were crucially relying on the assumption that every output position $x$
is correctly annotated with the productive transition that generated it.
This assumption cannot be guaranteed by the MSO sentence $\opar$ alone
(the property intrinsically talks about a relation between input and
output annotations).
Below, we explain how to check correctness of the output 
annotation with the additional help of the formulas $\move_\otype(y,z)$ 
(that will be modified for this purpose) and $\nxt_{\otype,\otype'}(z,z')$.

Let $\r$ be the successful run as encoded by the input annotation.
The idea is to check that the sequence of productive transitions $t_x$
that annotate the positions $x$ in the output is the maximal sub-sequence 
of $\r$ consisting only of productive transitions. Besides the straightforward
conditions (concerning, for instance, the first and last productive transitions 
of $\r$, or the possible multiple symbols that could be produced within a single
transition), the important condition to be verified is the following:
\begin{align*}
\parbox{0.89\textwidth}{
\emph{For every pair of consecutive output positions $x,x+1$
      with source origins $y,y'$, respectively,
      then on the run $\r$ that is annotated on the input, 
      one can move from transition $t_x$ at position $y$ to 
      transition $t_{x+1}$ at position $y'$ by using at the
      intermediate steps only non-productive transitions.}
}
\tag{$\dagger$}
\end{align*}

The above property is easily expressible by an MSO formula 
$\varphi^\dagger_{\otype,\otype'}(y,y')$, assuming that 
$\otype,\otype'$ are the output types of $x,x+1$ and
the free variables $y$ and $y'$ are interpreted by the 
\emph{source origins} of $x$ and $x+1$, with $x$ ranging 
over all output positions.
This is very close to the type of constraints that can be 
enforced by the formula $\nxt_{\otype,\otype'}$ of a regular 
resynchronizer, with the only difference that the latter formula
can only access the \emph{target origins} $z,z'$ of $x,x+1$.

We thus need a way to uniquely determine from the target origins 
$z,z'$ of $x$ the source origins $y,y'$ of $x$. 
For this, we could rely on the formulas $\move_\otype(y,z)$,
if only they were defining partial bijections between $y$ and $z$.
Those formulas are in fact close to define partial bijections, 
as they are functional and $k$-bounded, for $k=H\cdot4K\obound$.
The latter boundedness property, however, depends again on
the assumption that the output annotation is correct.
We overcome this problem by gradually modifying the resynchronizer 
$\Rr_*$ so as to make it functional and $1$-bounded (i.e., partially bijective), independently
of the output annotations.

We start by modifying the formulas $\move_\otype(y,z)$
to make them ``syntactically'' $k$-bounded.
Formally, we construct from $\move_\otype(y,z)$ the formula
\begin{align*}
  \move'_\otype(y,z) 
  &~=~ \move_\otype(y,z) \\
  &\:~\wedge~\:
       \forall y_1,\dots,y_k,y_{k+1} ~ 
       \Big(\bigwedge\nolimits_i \move_\otype(y_i,z)\Big) 
       \rightarrow 
       \Big(\bigvee\nolimits_{i\neq j} y_i=y_j\Big).
\end{align*}
Intuitively, the above formula is semantically equivalent to 
$\move_\otype(y,z)$ when there are at most $k$  
input positions $y'$ that can be paired with $z$ via the
same formula $\move_\otype$, and it is false otherwise.

Let $\Rr'_*$ be the regular resynchronizer obtained from $\Rr_*$
by replacing the formulas $\move_\otype$ by $\move'_\otype$,
for every output type $\otype$.
By construction, $\Rr'_*$ is 
functional and $k$-bounded, 
independently of any assumption on the output annotations.
We can then apply Lemma \ref{lem:one-bounded} and 
obtain from $\Rr'_*$ an equivalent regular resynchronizer
$\Rr''_*=(\bar I'',\bar O'',\ipar'',\opar'',(\move''_\otype)_\otype,(\nxt''_{\otype,\otype'})_{\otype,\otype'})$ 
that is  $1$-bounded. So each $\move''_\otype$ is a partial bijection.

We are now ready to verify the correctness of the output annotation.
Recall that the idea is to enforce the property $(\dagger)$ 
by exploiting the previously defined formula $\varphi^\dagger_{\otype,\otype'}(y,y')$
and the partial bijection between the source origings $y,y'$ and the
target origins $z,z'$, as defined by $\move''_\otype(y,z)$ and $\move''_{\otype'}(y',z')$.
Formally, we define 
\[
  \nxt'''_{\otype,\otype'}(z,z')
  ~=~ \nxt''_{\otype,\otype'}(z,z') ~\wedge~ 
      \exists y,y' ~ \move''_\otype(y,z) ~\wedge~ \move''_{\otype'}(y',z') ~\wedge~
      \varphi^\dagger_{\otype,\otype'}(y,y').
\]
To conclude, by replacing in $\Rr''$ the formulas $\nxt''_{\otype,\otype'}$ 
with $\nxt'''_{\otype,\otype'}$, we obtain a regular resynchronizer $\Rr$ 
that is 
partially bijective, $T$-preserving and such that $\Rr(T)$ is
order-preserving. This completes the proof of the implication $3\rightarrow 4$
of our main theorem.
\qed


	\section{Proof of Theorem \ref{thm:decidability}}\label{app:general-case}

We provide here the missing details of the proof of Theorem \ref{thm:decidability},
as sketched in Section \ref{sec:general-case}.
We recall that the goal is to construct, from a given arbitrary two-way 
transducer $T$:
\begin{enumerate}
  \item a bounded-visit transducer $\short T$ that is classically equivalent to $T$, 
  \item 
    partially bijective, regular resynchronizer $\Rr$ that is $T$-preserving
        and such that $\Rr(T) \oeq \short T$.
\end{enumerate}

We will reason with a fixed input $u$ at hand and with an induced 
accessibility relation on productive transitions of $T$, tagged with origins.
Formally, a \emph{tagged transition} is any pair $(t,y)$ consisting of a 
transition $t\in\Delta$ and a position $y$ on the input $u$, such that $t$ 
occurs at position $y$ in some successful run on $u$.
The accessibility preorder on tagged transitions is such that $(t,y) \preceq_u (t',y')$
whenever $T$ has a run on $u$ starting with transition $t$ at position $y$
and ending with transition $t'$ at position $y'$.
This preorder induces an equivalence relation, denoted $\sim_u$.
Intuitively, $(t,y)\sim_u (t',y')$ means that $T$ can cycle an arbitrary
number of times between these two tagged transitions
(possibly $(t,y)=(t',y')$). A $\sim_u$-equivalence class $C$ is called
\emph{realizable on $u$} if there is a successful run on $u$ that uses
at least once a tagged transition from the class $C$.
  
We say that $T$ is \emph{$K$-sparse} if for every input $u$
and every realizable $\sim_u$-equivalence class $C$, there are at most
$K$ \emph{productive} tagged transitions in $C$
(recall that a productive transition is one that produces non-empty output).  
Intuitively, bounded sparsity means that the number of origins of outputs 
produced by vertical loops in successful runs of $T$ is uniformly bounded. 
If $T$ is not $K$-sparse for any $K$, then we say that $T$ has \emph{unbounded sparsity}.

When $T$ is $K$-sparse, the productive tagged transitions
from the same realizable $\sim_u$-equivalence class can be lexicographically 
ordered and distinguished by means of numbers from a fixed finite range, 
say $\{1,\dots,K\}$.
An important observation is that the equivalence $\sim_u$ is a 
regular property, in the sense that one can construct, for instance, 
an MSO formula $\varphi^{\sim_u}_{t,t'}(y,y')$ that holds on input $u$
if and only if $(t,y)\sim_u (t',y')$.
In particular, this implies that unbounded sparsity can be effectively
tested: it suffices to construct the regular language consisting of
every possible input $u$ with a distinguished realizable $\sim_u$-equivalence 
class marked on it, and check whether this language contains words with 
arbitrarily many marked positions that correspond to productive tagged
transitions (this boils down to detecting special 
loops in a classical finite-state automaton).

\begin{lemma}\label{lem:unbounded-sparsity}
If $T$ has unbounded sparsity, then $T$ is not one-way definable.
\end{lemma}

\begin{proof}
The assumption that $T$ has unbounded sparsity and the definition of $\sim_u$ 
imply that, for every $n\in\mathbb{N}$, there exist an input $u$, 
a successful run $\rho$ on $u$, and $2n$ tagged transitions 
$(t_1,y_1),\dots,(t_n,y_n)$, $(t'_1,y'_1),\dots,(t'_n,y'_n)$
such that the $t_i$'s occur before the $t'_j$ in $\rho$ and
the $y'_i$ are to the right of the $y'_j$.
Since $n$ can grow arbitrarily, this witnesses precisely the fact that $
T$ has unbounded cross-width.
Thus, by the implication $1\rightarrow 2$ of Theorem~\ref{thm:oneway-def},
which is independent of $T$ being bounded-visit, we know that $T$ is 
not one-way resynchronizable.
\qed
\end{proof}

\medskip
Let us now show how to construct a bounded-visit transducer $\short T$ 
with regular outputs and common guess that is equivalent to $T$, under the
assumption that $T$ is $K$-sparse for some constant $K$.
Intuitively, $\short T$ simulates successful runs of $T$ on input $u$
by shortcutting maximal vertical loops. 
Formally, for an input $u$ and a tagged transition $(t,y)$,
a \emph{vertical loop} at $(t,y)$ is any run on $u$ that starts and ends 
with transition $t$ at position $y$. 
We will tacitly focus on vertical loops that are realizable on $u$,
exactly as we did for $\sim_u$-equivalence classes.
The output of a vertical loop is the word spelled out by the 
productive transitions in it.

Of course, all tagged transitions in a vertical loop at $(t,y)$ 
are $\sim_u$-equivalent to $(t,y)$. In particular, as $T$ is $K$-sparse, 
there are at most $K$ productive tagged transitions in a
(realizable) vertical loop, 
and hence the language $L_{t,y}$ of outputs of vertical loops at $(t,y)$ 
is regular. In addition, there are only finitely many languages
$L_{t,y}$ for varying $(t,y)$.  This can be seen as follows: we can assume an order on the
elements of the $\sim_u$-class $C$ of $(t,y)$, and a strongly connected
graph with nodes corresponding to $C$ and edges reflecting the accessibility
preorder. The correctness of the graph can be checked with regular
annotations on the input, and the graph itself can be turned into an
automaton accepting $L_{t,y}$. 
Therefore, using common guess in $\short T$, we can assume 
that every position $y$ carries as annotation the language $L_{t,y}$ 
for each transition $t$. By definition, $L_{t,y}$ is non-empty if and only
if there is some productive vertical loop at $(t,y)$. 

Consider an arbitrary successful run $\rho$ of $T$ on $u$.
Let $\short\rho$ be the run obtained by replacing, from left to right,
every maximal vertical loop at $(t,y)$ by the single transition
$t$. Here, maximality refers to the subrun relation. 
We call $\short\rho$ the \emph{normalization of $\rho$} and we observe that
this is a successful, $|\D|$-visit run.
This means that (i) $\short\rho$ can be finitely encoded on the input as a sequence 
of flows of height at most $|\Delta|$, and (ii) the language consisting of inputs 
annotated with such encodings is regular.

The transducer $\short T$ guesses the encoding of a normalization $\short\rho$ 
and uses it to simulate a possible run $\rho$ of $T$. 
In particular, every time $\short T$ traverses a transition $t$ from 
the flow of $\short\rho$ at position $y$, it outputs a word from the language $L_{t,y}$.
However, in order to simplify later the construction of a resynchronizer $\Rr$
such that $\Rr(T) \oeq \short T$, 
it is convenient that $\short T$ outputs the word 
from $L_{t,y}$ in a possibly different origin, which is uniquely determined
by the $\sim_u$-equivalence class of $(t,y)$. Formally, we define the
\emph{anchor} of a $\sim_u$-equivalence class $C$, denoted $\an{C}$, 
to be the leftmost input position $z$ such that $(t',z)\in C$ for some transition $t'$. 
After traversing a transition $t$ from the flow at position $y$, and before
outputting a word from $L_{t,y}$, the transducer $\short T$ moves to the
anchor $\an{[(t,y)]_{\sim_u}}$. Then it outputs the appropriate word and
moves back to position $y$, where it can resume the simulation of 
the normalized run $\short\rho$. 
Note that the position $y$ can be recovered from the anchor $\an{[(t,y)]_{\sim_u}}$
since the elements inside the equivalence class $[(t,y)]_{\sim_u}$
can be identified by numbers from $\{1,\dots,K\}$ (recall that $T$ is $K$-sparse),
and since the relationship between any two such elements is a regular property.
It is routine to verify that the described transducer $\short T$ 
is equivalent to $T$ and bounded-visit.

\medskip
Let us now explain how to construct a 
partially bijective,
regular resynchronizer $\Rr$ that is $T$-preserving and 
such that $\Rr(T) \oeq \short T$.
We proceed as in the construction of $\short T$ by annotating the
input word $u$
with flows that encode the normalization $\short\rho$ of a successful
run $\rho$ of $T$ on $u$.
As for the output word $v$, we annotate every position $x$  of $v$
with the productive transition $t=(q,a,v,q')$ of $\rho$ that generated $x$.
For short, we call $t$ \emph{the transition of $x$}.
In addition, we fix an MSO-definable total ordering on tagged transitions
(e.g.~the lexicographic ordering). Then, we determine from each output 
position $x$ the $\sim_u$-equivalence class $C=[(t,y)]_{\sim_u}$, where 
$u$ is the underlying input, $t$ is the productive transition that generated $x$, 
and $y$ is its origin, and we extend the annotation of $x$ with the index 
of the element $(t,y)$ inside the equivalence class $C$, according to the 
fixed total ordering on tagged transitions. This number $i$ is called
\emph{the index of $x$}.

The resynchronizer $\Rr$ needs to redirect the source origin 
$y$ of any output position generated by a transition $t$ 
to a target origin $z$ that is the anchor of the $\sim_u$-equivalence 
class of $(t,y)$.
To simplify the explanation, we temporarily assume that the 
input and output are correctly annotated as described above.
By inspecting the type $\otype$ of an output position $x$,
the formula $\move_\otype(y,z)$ of $\Rr$ can determine the 
transition $t$ of $x$, and enforce that 
$(t,y)\sim_u (t',z)$, for some transition $t'$, and 
that $(t,y)\not\sim_u (t'',z')$, for all $z'<z$ and all transitions $t''$.
\emph{Under the assumption that the input and output annotations are correct},
this would result in a bounded resynchronizer $\Rr$. 
Indeed, for every position $z$, there exist at most $K\cdot|\Delta|$ 
positions $y$ that, paired with some productive transition, turn out to be
$\sim_u$-equivalent to $(t',z)$ for some transition $t'$. 
Once again, we need to further constrain the relation $\move_\otype(y,z)$ 
so that it describes a partial bijection between source and target origins
(this will be useful later).
For this, it suffices to additionally enforce that 
$(t,y)$ is the $i$-th element in its $\sim_u$-equivalence class,
accordingly to the fixed total ordering on tagged transitions,
where $i$ is the index specified in the output type $\otype$ of $x$.
This latter modification also guarantees that
$i$ is the correct index of $x$.

Unless we further refine our constructions,
we cannot claim that they always result in a $1$-bounded resynchronizer $\Rr$,
since the above arguments crucially rely on the assumption that 
the input and output annotations are correct. 
However, we can apply the same trick that we
used in the proof of Theorem \ref{thm:oneway-def}, 
to make the resynchronizer $\Rr$ ``syntactically'' $1$-bounded,
even in the presence of badly-formed annotations. 
Formally, let $\move_\otype(y,z)$ be the formula 
that transforms the origins in the way described above, 
and define 
\[
  \move'_\otype(y,z) ~=~ \move_\otype(y,z) ~\wedge~ 
                         \forall y' ~ \big(\move_\otype(y',z) \rightarrow y'=y\big).
\]
By construction, the above formula defines a partial 
bijection entailing the old relation $\move_\otype$
(in the worst case, when the annotations are not correct,
the above formula may not hold for some pairs of source 
and target origins).
In addition, if the annotations are correct, then 
$\move'_\otype(y,z)$ is semantically equivalent to 
$\move_\otype(y,z)$, as desired.
In this way, we obtain a regular resynchronizer 
$\Rr=(\bar I,\bar O,\ipar,\opar,\move'_\otype,\nxt)$
that is always $1$-bounded, no matter how we define 
$\ipar$, $\opar$, and $\nxt$. 

\medskip
We now explain how to check that the annotations are correct.
The input annotation does not pose any particular problem, since
the language of inputs annotated with normalized runs is regular,
and can be checked using the first formula $\ipar$ of the resynchronizer.
As for the output annotation, correctness of the indices 
was already enforced by the $\move'_\otype$ relation.
It remains to enforce correctness of the transitions.
Once again, this boils down to verifying the following property ($\dagger$):
\begin{align*}
\parbox{0.89\textwidth}{
\emph{For every pair of consecutive output positions $x,x+1$
      with source origins $y,y'$, respectively,
      if $t,t'$ are the productive transitions specified in the output
      types of $x,x+1$, 
      then on the flows that annotate the input, one can move from
      transition $t$ at position $y$ to transition $t'$ at position $y'$
      by using as intermediate edges only non-productive transitions.}
}
\tag{$\dagger$}
\end{align*}

\noindent
From here we proceed exactly as in the proof of Theorem \ref{thm:oneway-def}.
We observe that Property ($\dagger$) is expressible by an MSO formula 
$\varphi_{\otype,\otype'}^{\dagger}(y,y')$, assuming that 
$\otype,\otype'$ are the output types of $x,x+1$,
that $y,y'$ are interpreted by the source origins of $x,x+1$, 
and that $x$ ranges over all output positions.
We then recall that $\move_\otype(y,z)$ and $\move_\otype(y',z')$ 
describe partial bijections between source and target origins,
and exploit this enforce ($\dagger$) by means of the last formula
of $\Rr$:
\[
  \nxt_{\otype,\otype'}(z,z')
  ~=~ \exists y,y' ~ \move_\otype(y,z) ~\wedge~ \move_{\otype'}(y',z') ~\wedge~
      \varphi_{\otype,\otype'}^\dagger(y,y').
\]
This guarantees that all annotations are correct, and proves
that $\Rr$ is a 
partially bijective, regular resynchronizer 
satisfying $\Rr(T) \oeq \short T$.
It is also immediate to see that $\Rr$ is $T$-preserving.

\medskip
We finally prove that one-way resynchronizability of $T$ reduces
to one-way resynchronizability of $\short T$, which can be effectively 
tested using Theorem \ref{thm:oneway-def} since $\short T$ is bounded-visit:

\Robustness*

\begin{proof}
For the right-to-left implication, suppose that $T' \oeq \Rr(T)$ is bounded-visit 
and one-way resynchronizable. Since $T'$ is bounded-visit, 
we can use the implications 
$1\rightarrow 2\rightarrow 3\rightarrow 4$ in
Theorem~\ref{thm:oneway-def} to get the existence of a
partially bijective, regular resynchronizer $\Rr'$ that is $T'$-preserving and such that
$\Rr'(T')$ is order-preserving.
By Lemma \ref{lem:resync-closure}, 
there is a bounded, regular resynchronizer $\Rr''$ that
is equivalent to $\Rr'\circ\Rr$. In particular, $\Rr''(T)$ is order-preserving.
It remains to verify that $\Rr''$ is also $T$-preserving. 
Consider any synchronized pair $\s \in\sem{T}_o$. Since $\Rr$ is $T$-preserving,
$\s$ belongs to the domain of $\Rr'$, and hence $(\s,\s')\in\Rr$ for some synchronized
pair $\s'\in\sem{T'}_o$. Since $\Rr$ is $T'$-preserving, $\s'$ belongs to the domain
of $\Rr$, and hence there is $(\s,\s'')\in(\Rr'\circ\Rr) = \Rr''$. 
This shows that $\Rr''$ is $T$-preserving, and hence $T$ is one-way resynchronizable.

\medskip
For the converse direction, 
suppose that 
$T'$ is bounded-visit, but not one-way resynchronizable. 
We apply again Theorem~\ref{thm:oneway-def}, but now we use the contrapositives 
of the implications $2\rightarrow 3\rightarrow 4\rightarrow 1$, and obtain 
that $T'$ has unbounded cross-width 
(see Definition~\ref{def:crosswidth}).

We also recall that 
$\Rr=(\bar I,\bar O,\ipar,(\move_{\otype})_{\otype}, (\nxt_{\otype,\otype'})_{\otype,\otype})$ 
is partially bijective.  
This means that every formula $\move_\otype(y,z)$
defines a partial bijection from source to target positions.
A useful property of every MSO-definable partial bijection 
is that, for every position $t$, it can only define boundedly 
many pairs $(y,z)$ with either $y\le t<z$ or $z\le t<y$
--- for short, we say call such a pair $(y,z)$ \emph{$t$-separated}.
This follows from compositional properties of regular languages.
Indeed, let $\Aa$ be a deterministic automaton equivalent 
to the formula that defines the partial bijection. For every 
pair $(y,z)$ in the partial bijection, let $q_{y,z}$ be the
state visited at position $t$ by the successful run of $\Aa$ 
on the input annotated with the pair $(y,z)$.
If $\Aa$ accepted more than $|Q|$ pairs that are $t$-separated,
where $Q$ is the state space of $\Aa$, then at least two of them,
say $(y,z)$ and $(y',z')$, would satisfy $q_{y,z} = q_{y',z'}$. 
But this would imply that the pair $(y,z')$ is also 
accepted by $\Aa$, which contradicts the assumption that $\Aa$
defines a partial bijection.

We now exploit the above result to prove that the property 
of having unbounded cross-width transfers from $T'$ to $T$.  
Consider a cross $(X_1,X_2)$ of arbitrarily large width $h$
in some synchronized pair $\s=(u,v,\orig)$ of $T'$.
Without loss of generality, assume that all positions in 
$X_1 \cup X_2$ have the same type $\otype$. 
Let $Z_i=\orig(X_i)$, for $i=1,2$, and $t=\max(Z_2)$.
By definition of cross, we have $X_1<X_2$ and $Z_2\le t < Z_1$.
Recall that $\move_\otype$ defines a partial bijection,
and that this implies that there are only boundedly many
pairs of source-target origins that are $t$-separated,
say $(y_1,z_1),\dots,(y_k,z_k)$ for a constant $k$
that only depends on $\Rr$.
Moreover, since $\Rr(T) \oeq T'$, the positions in $Z_i$ can
be seen as target origins of the formula $\move_\otype$ of $\Rr$. 
Now, let $X'_i = X_i \setminus \orig^{-1}(\{z_1,\dots,z_k\}$
and $Y'_i = \orig'(X'_i)$, for any synchronized pair
$\s'=(u,v,\orig')$ such that $(\s,\s')\in\Rr$.
By construction, we have $X'_1<X'_2$ and $Y'_2\le t < Y'_1$
(the latter condition follows from the fact that the source 
origins from $Y'_i$ can only be moved to target 
origins on the same side w.r.t.~$t$).
This means that $(X'_1,X'_2)$ is a cross of width $h-k$.
As $h$ can be taken arbitrarily large and $k$ is constant, 
this proves that $T$ has unbounded cross-width as well.



Finally, by the contrapositive of the implication $1\rightarrow 2$ of 
Theorem~\ref{thm:oneway-def} (which does not need the assumption 
that $T$ is bounded-visit), we conclude that $T$ is not one-way resynchronizable.
\qed
\end{proof}

\medskip
Summing up, the algorithm that decides whether a given two-way transducer $T$ 
is one-way resynchronizable first verifies that $T$ is $K$-sparse for 
some $K$ (if not, it claims that $T$ is not one-way resynchronziable),
then it constructs a bounded-visit transducer $\short T$ equivalent to $T$,
and finally decides whether $\short T$ is one-way resynchronizable 
(which happens if and only if $T$ is one-way resynchronizable).
This concludes the proof of Theorem \ref{thm:decidability}.
\qed


	}{
	}
\end{document}